\title{Efficient and Cryptographically Secure Generation of Chaotic Pseudorandom Numbers on GPU}
\begin{document}

\author{Jacques M. Bahi, Rapha\"{e}l Couturier,  Christophe
Guyeux, and Pierre-Cyrille Héam\thanks{Authors in alphabetic order}}
   
\maketitle

\begin{abstract}
In this paper we present a new pseudorandom number generator (PRNG) on
graphics processing units  (GPU). This PRNG is based  on the so-called chaotic iterations.  It
is firstly proven  to be chaotic according to the Devaney's  formulation. We thus propose  an efficient
implementation  for  GPU that successfully passes the   {\it BigCrush} tests, deemed to be the  hardest
battery of tests in TestU01.  Experiments show that this PRNG can generate
about 20 billion of random numbers  per second on Tesla C1060 and NVidia GTX280
cards.
It is then established that, under reasonable assumptions, the proposed PRNG can be cryptographically 
secure.
A chaotic version of the Blum-Goldwasser asymmetric key encryption scheme is finally proposed.

\end{abstract}

\section{Introduction}

Randomness is of importance in many fields such as scientific simulations or cryptography. 
``Random numbers'' can mainly be generated either by a deterministic and reproducible algorithm
called a pseudorandom number generator (PRNG), or by a physical non-deterministic 
process having all the characteristics of a random noise, called a truly random number
generator (TRNG). 
In this paper, we focus on reproducible generators, useful for instance in
Monte-Carlo based simulators or in several cryptographic schemes.
These domains need PRNGs that are statistically irreproachable. 
In some fields such as in numerical simulations, speed is a strong requirement
that is usually attained by using parallel architectures. In that case,
a recurrent problem is that a deflation of the statistical qualities is often
reported, when the parallelization of a good PRNG is realized.
This is why ad-hoc PRNGs for each possible architecture must be found to
achieve both speed and randomness.
On the other side, speed is not the main requirement in cryptography: the great
need is to define \emph{secure} generators able to withstand malicious
attacks. Roughly speaking, an attacker should not be able in practice to make 
the distinction between numbers obtained with the secure generator and a true random
sequence. 
Finally, a small part of the community working in this domain focuses on a
third requirement, that is to define chaotic generators.
The main idea is to take benefits from a chaotic dynamical system to obtain a
generator that is unpredictable, disordered, sensible to its seed, or in other word chaotic.
Their desire is to map a given chaotic dynamics into a sequence that seems random 
and unassailable due to chaos.
However, the chaotic maps used as a pattern are defined in the real line 
whereas computers deal with finite precision numbers.
This distortion leads to a deflation of both chaotic properties and speed.
Furthermore, authors of such chaotic generators often claim their PRNG
as secure due to their chaos properties, but there is no obvious relation
between chaos and security as it is understood in cryptography.
This is why the use of chaos for PRNG still remains marginal and disputable.

The authors' opinion is that topological properties of disorder, as they are
properly defined in the mathematical theory of chaos, can reinforce the quality
of a PRNG. But they are not substitutable for security or statistical perfection.
Indeed, to the authors' mind, such properties can be useful in the two following situations. On the
one hand, a post-treatment based on a chaotic dynamical system can be applied
to a PRNG statistically deflective, in order to improve its statistical 
properties. Such an improvement can be found, for instance, in~\cite{bgw09:ip,bcgr11:ip}.
On the other hand, chaos can be added to a fast, statistically perfect PRNG and/or a
cryptographically secure one, in case where chaos can be of interest,
\emph{only if these last properties are not lost during
the proposed post-treatment}. Such an assumption is behind this research work.
It leads to the attempts to define a 
family of PRNGs that are chaotic while being fast and statistically perfect,
or cryptographically secure.
Let us finish this paragraph by noticing that, in this paper, 
statistical perfection refers to the ability to pass the whole 
{\it BigCrush} battery of tests, which is widely considered as the most
stringent statistical evaluation of a sequence claimed as random.
This battery can be found in the well-known TestU01 package~\cite{LEcuyerS07}.
Chaos, for its part, refers to the well-established definition of a
chaotic dynamical system proposed by Devaney~\cite{Devaney}.

In a previous work~\cite{bgw09:ip,guyeux10} we have proposed a post-treatment on PRNGs making them behave
as a chaotic dynamical system. Such a post-treatment leads to a new category of
PRNGs. We have shown that proofs of Devaney's chaos can be established for this
family, and that the sequence obtained after this post-treatment can pass the
NIST~\cite{Nist10}, DieHARD~\cite{Marsaglia1996}, and TestU01~\cite{LEcuyerS07} batteries of tests, even if the inputted generators
cannot.
The proposition of this paper is to improve widely the speed of the formerly
proposed generator, without any lack of chaos or statistical properties.
In particular, a version of this PRNG on graphics processing units (GPU)
is proposed.
Although GPU was initially designed  to accelerate
the manipulation of  images, they are nowadays commonly  used in many scientific
applications. Therefore,  it is important  to be able to  generate pseudorandom
numbers inside a GPU when a scientific application runs in it. This remark
motivates our proposal of a chaotic and statistically perfect PRNG for GPU.  
Such device
allows us to generate almost 20 billion of pseudorandom numbers per second.
Furthermore, we show that the proposed post-treatment preserves the
cryptographical security of the inputted PRNG, when this last has such a 
property.
Last, but not least, we propose a rewriting of the Blum-Goldwasser asymmetric
key encryption protocol by using the proposed method.

The remainder of this paper  is organized as follows. In Section~\ref{section:related
  works} we  review some GPU implementations  of PRNGs.  Section~\ref{section:BASIC
  RECALLS} gives some basic recalls  on the well-known Devaney's formulation of chaos, 
  and on an iteration process called ``chaotic
iterations'' on which the post-treatment is based. 
The proposed PRNG and its proof of chaos are given in  Section~\ref{sec:pseudorandom}.
Section~\ref{sec:efficient    PRNG}   presents   an   efficient
implementation of  this chaotic PRNG  on a CPU, whereas   Section~\ref{sec:efficient PRNG
  gpu}   describes and evaluates theoretically  the  GPU   implementation. 
Such generators are experimented in 
Section~\ref{sec:experiments}.
We show in Section~\ref{sec:security analysis} that, if the inputted
generator is cryptographically secure, then it is the case too for the
generator provided by the post-treatment.
Such a proof leads to the proposition of a cryptographically secure and
chaotic generator on GPU based on the famous Blum Blum Shum
in Section~\ref{sec:CSGPU}, and to an improvement of the
Blum-Goldwasser protocol in Sect.~\ref{Blum-Goldwasser}.
This research work ends by a conclusion section, in which the contribution is
summarized and intended future work is presented.

\section{Related works on GPU based PRNGs}
\label{section:related works}

Numerous research works on defining GPU based PRNGs have already been proposed  in the
literature, so that exhaustivity is impossible.
This is why authors of this document only give reference to the most significant attempts 
in this domain, from their subjective point of view. 
The  quantity of pseudorandom numbers generated per second is mentioned here 
only when the information is given in the related work. 
A million numbers  per second will be simply written as
1MSample/s whereas a billion numbers per second is 1GSample/s.

In \cite{Pang:2008:cec}  a PRNG based on  cellular automata is defined
with no  requirement to an high  precision  integer   arithmetic  or to any bitwise
operations. Authors can   generate  about
3.2MSamples/s on a GeForce 7800 GTX GPU, which is quite an old card now.
However, there is neither a mention of statistical tests nor any proof of
chaos or cryptography in this document.

In \cite{ZRKB10}, the authors propose  different versions of efficient GPU PRNGs
based on  Lagged Fibonacci or Hybrid  Taus.  They have  used these
PRNGs   for  Langevin   simulations   of  biomolecules   fully  implemented   on
GPU. Performances of  the GPU versions are far better than  those obtained with a
CPU, and these PRNGs succeed to pass the {\it BigCrush} battery of TestU01. 
However the evaluations of the proposed PRNGs are only statistical ones.

Authors of~\cite{conf/fpga/ThomasHL09}  have studied the  implementation of some
PRNGs on  different computing architectures: CPU,  field-programmable gate array
(FPGA), massively parallel  processors, and GPU. This study is of interest, because
the  performance  of the  same  PRNGs on  different architectures are compared. 
FPGA appears as  the  fastest  and the most
efficient architecture, providing the fastest number of generated pseudorandom numbers
per joule. 
However, we notice that authors can ``only'' generate between 11 and 16GSamples/s
with a GTX 280  GPU, which should be compared with
the results presented in this document.
We can remark too that the PRNGs proposed in~\cite{conf/fpga/ThomasHL09} are only
able to pass the {\it Crush} battery, which is far easier than the {\it Big Crush} one.

Lastly, Cuda  has developed  a  library for  the  generation of  pseudorandom numbers  called
Curand~\cite{curand11}.        Several       PRNGs        are       implemented, among
other things 
Xorwow~\cite{Marsaglia2003} and  some variants of Sobol. The  tests reported show that
their  fastest version provides  15GSamples/s on  the new  Fermi C2050  card. 
But their PRNGs cannot pass the whole TestU01 battery (only one test is failed).
\newline
\newline
We can finally remark that, to the best of our knowledge, no GPU implementation has been proven to be chaotic, and the cryptographically secure property has surprisingly never been considered.

\section{Basic Recalls}
\label{section:BASIC RECALLS}

This section is devoted to basic definitions and terminologies in the fields of
topological chaos and chaotic iterations.
\subsection{Devaney's Chaotic Dynamical Systems}

In the sequel $S^{n}$ denotes the $n^{th}$ term of a sequence $S$ and $V_{i}$
denotes the $i^{th}$ component of a vector $V$. $f^{k}=f\circ ...\circ f$
is for the $k^{th}$ composition of a function $f$. Finally, the following
notation is used: $\llbracket1;N\rrbracket=\{1,2,\hdots,N\}$.

Consider a topological space $(\mathcal{X},\tau)$ and a continuous function $f :
\mathcal{X} \rightarrow \mathcal{X}$.

\begin{definition}
$f$ is said to be \emph{topologically transitive} if, for any pair of open sets
$U,V \subset \mathcal{X}$, there exists $k>0$ such that $f^k(U) \cap V \neq
\varnothing$.
\end{definition}

\begin{definition}
An element $x$ is a \emph{periodic point} for $f$ of period $n\in \mathds{N}^*$
if $f^{n}(x)=x$.
\end{definition}

\begin{definition}
$f$ is said to be \emph{regular} on $(\mathcal{X}, \tau)$ if the set of periodic
points for $f$ is dense in $\mathcal{X}$: for any point $x$ in $\mathcal{X}$,
any neighborhood of $x$ contains at least one periodic point (without
necessarily the same period).
\end{definition}

\begin{definition}[Devaney's formulation of chaos~\cite{Devaney}]
$f$ is said to be \emph{chaotic} on $(\mathcal{X},\tau)$ if $f$ is regular and
topologically transitive.
\end{definition}

The chaos property is strongly linked to the notion of ``sensitivity'', defined
on a metric space $(\mathcal{X},d)$ by:

\begin{definition}
\label{sensitivity} $f$ has \emph{sensitive dependence on initial conditions}
if there exists $\delta >0$ such that, for any $x\in \mathcal{X}$ and any
neighborhood $V$ of $x$, there exist $y\in V$ and $n > 0$ such that
$d\left(f^{n}(x), f^{n}(y)\right) >\delta $.

$\delta$ is called the \emph{constant of sensitivity} of $f$.
\end{definition}

Indeed, Banks \emph{et al.} have proven in~\cite{Banks92} that when $f$ is
chaotic and $(\mathcal{X}, d)$ is a metric space, then $f$ has the property of
sensitive dependence on initial conditions (this property was formerly an
element of the definition of chaos). To sum up, quoting Devaney
in~\cite{Devaney}, a chaotic dynamical system ``is unpredictable because of the
sensitive dependence on initial conditions. It cannot be broken down or
simplified into two subsystems which do not interact because of topological
transitivity. And in the midst of this random behavior, we nevertheless have an
element of regularity''. Fundamentally different behaviors are consequently
possible and occur in an unpredictable way.

\subsection{Chaotic Iterations}
\label{sec:chaotic iterations}

Let us consider  a \emph{system} with a finite  number $\mathsf{N} \in
\mathds{N}^*$ of elements  (or \emph{cells}), so that each  cell has a
Boolean  \emph{state}. Having $\mathsf{N}$ Boolean values for these
 cells  leads to the definition of a particular \emph{state  of the
system}. A sequence which  elements belong to $\llbracket 1;\mathsf{N}
\rrbracket $ is called a \emph{strategy}. The set of all strategies is
denoted by $\llbracket 1, \mathsf{N} \rrbracket^\mathds{N}.$

\begin{definition}
\label{Def:chaotic iterations}
The      set       $\mathds{B}$      denoting      $\{0,1\}$,      let
$f:\mathds{B}^{\mathsf{N}}\longrightarrow  \mathds{B}^{\mathsf{N}}$ be
a  function  and  $S\in  \llbracket 1, \mathsf{N} \rrbracket^\mathds{N}$  be  a  ``strategy''.  The  so-called
\emph{chaotic      iterations}     are     defined      by     $x^0\in
\mathds{B}^{\mathsf{N}}$ and
\begin{equation}
\forall    n\in     \mathds{N}^{\ast     },    \forall     i\in
\llbracket1;\mathsf{N}\rrbracket ,x_i^n=\left\{
\begin{array}{ll}
  x_i^{n-1} &  \text{ if  }S^n\neq i \\
  \left(f(x^{n-1})\right)_{S^n} & \text{ if }S^n=i.
\end{array}\right.
\end{equation}
\end{definition}

In other words, at the $n^{th}$ iteration, only the $S^{n}-$th cell is
\textquotedblleft  iterated\textquotedblright .  Note  that in  a more
general  formulation,  $S^n$  can   be  a  subset  of  components  and
$\left(f(x^{n-1})\right)_{S^{n}}$      can     be      replaced     by
$\left(f(x^{k})\right)_{S^{n}}$, where  $k<n$, describing for example,
delays  transmission~\cite{Robert1986,guyeux10}.  Finally,  let us  remark that
the term  ``chaotic'', in  the name of  these iterations,  has \emph{a
priori} no link with the mathematical theory of chaos, presented above.

Let us now recall how to define a suitable metric space where chaotic iterations
are continuous. For further explanations, see, e.g., \cite{guyeux10}.

Let $\delta $ be the \emph{discrete Boolean metric}, $\delta
(x,y)=0\Leftrightarrow x=y.$ Given a function $f$, define the function:
\begin{equation}
\begin{array}{lrll}
F_{f}: & \llbracket1;\mathsf{N}\rrbracket\times \mathds{B}^{\mathsf{N}} &
\longrightarrow & \mathds{B}^{\mathsf{N}} \\
& (k,E) & \longmapsto & \left( E_{j}.\delta (k,j)+f(E)_{k}.\overline{\delta
(k,j)}\right) _{j\in \llbracket1;\mathsf{N}\rrbracket},%
\end{array}%
\end{equation}%
\noindent where + and . are the Boolean addition and product operations.
Consider the phase space:
\begin{equation}
\mathcal{X} = \llbracket 1 ; \mathsf{N} \rrbracket^\mathds{N} \times
\mathds{B}^\mathsf{N},
\end{equation}
\noindent and the map defined on $\mathcal{X}$:
\begin{equation}
G_f\left(S,E\right) = \left(\sigma(S), F_f(i(S),E)\right), \label{Gf}
\end{equation}
\noindent where $\sigma$ is the \emph{shift} function defined by $\sigma
(S^{n})_{n\in \mathds{N}}\in \llbracket 1, \mathsf{N} \rrbracket^\mathds{N}\longrightarrow (S^{n+1})_{n\in
\mathds{N}}\in \llbracket 1, \mathsf{N} \rrbracket^\mathds{N}$ and $i$ is the \emph{initial function} 
$i:(S^{n})_{n\in \mathds{N}} \in \llbracket 1, \mathsf{N} \rrbracket^\mathds{N}\longrightarrow S^{0}\in \llbracket
1;\mathsf{N}\rrbracket$. Then the chaotic iterations proposed in
Definition \ref{Def:chaotic iterations} can be described by the following iterations:
\begin{equation}
\left\{
\begin{array}{l}
X^0 \in \mathcal{X} \\
X^{k+1}=G_{f}(X^k).%
\end{array}%
\right.
\end{equation}%

With this formulation, a shift function appears as a component of chaotic
iterations. The shift function is a famous example of a chaotic
map~\cite{Devaney} but its presence is not sufficient enough to claim $G_f$ as
chaotic. 
To study this claim, a new distance between two points $X = (S,E), Y =
(\check{S},\check{E})\in
\mathcal{X}$ has been introduced in \cite{guyeux10} as follows:
\begin{equation}
d(X,Y)=d_{e}(E,\check{E})+d_{s}(S,\check{S}),
\end{equation}
\noindent where
\begin{equation}
\left\{
\begin{array}{lll}
\displaystyle{d_{e}(E,\check{E})} & = & \displaystyle{\sum_{k=1}^{\mathsf{N}%
}\delta (E_{k},\check{E}_{k})}, \\
\displaystyle{d_{s}(S,\check{S})} & = & \displaystyle{\dfrac{9}{\mathsf{N}}%
\sum_{k=1}^{\infty }\dfrac{|S^k-\check{S}^k|}{10^{k}}}.%
\end{array}%
\right.
\end{equation}

This new distance has been introduced to satisfy the following requirements.
\begin{itemize}
\item When the number of different cells between two systems is increasing, then
their distance should increase too.
\item In addition, if two systems present the same cells and their respective
strategies start with the same terms, then the distance between these two points
must be small because the evolution of the two systems will be the same for a
while. Indeed, both dynamical systems start with the same initial condition,
use the same update function, and as strategies are the same for a while, furthermore
updated components are the same as well.
\end{itemize}
The distance presented above follows these recommendations. Indeed, if the floor
value $\lfloor d(X,Y)\rfloor $ is equal to $n$, then the systems $E, \check{E}$
differ in $n$ cells ($d_e$ is indeed the Hamming distance). In addition, $d(X,Y) - \lfloor d(X,Y) \rfloor $ is a
measure of the differences between strategies $S$ and $\check{S}$. More
precisely, this floating part is less than $10^{-k}$ if and only if the first
$k$ terms of the two strategies are equal. Moreover, if the $k^{th}$ digit is
nonzero, then the $k^{th}$ terms of the two strategies are different.
The impact of this choice for a distance will be investigated at the end of the document.

Finally, it has been established in \cite{guyeux10} that,

\begin{proposition}
Let $f$ be a map from $\mathds{B}^\mathsf{N}$ to itself. Then $G_{f}$ is continuous in
the metric space $(\mathcal{X},d)$.
\end{proposition}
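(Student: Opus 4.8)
The plan is to prove continuity directly from the $\varepsilon$--$\eta$ definition in the metric space $(\mathcal{X},d)$, exploiting the fact that the distance splits as $d=d_e+d_s$ across the two coordinates and that the Boolean coordinate carries the \emph{discrete} (integer-valued Hamming) metric $d_e$. Fix an arbitrary point $X=(S,E)\in\mathcal{X}$ and a target $\varepsilon>0$; the goal is to produce $\eta>0$ such that $d(X,Y)<\eta$ implies $d\bigl(G_f(X),G_f(Y)\bigr)<\varepsilon$ for every $Y=(\check{S},\check{E})$. Unfolding the definition of $G_f$,
\[
d\bigl(G_f(X), G_f(Y)\bigr) = d_e\bigl(F_f(i(S),E),\, F_f(i(\check{S}), \check{E})\bigr) + d_s\bigl(\sigma(S), \sigma(\check{S})\bigr),
\]
so it suffices to arrange separately that the Boolean term on the right vanishes and that the strategy term falls below $\varepsilon$.

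First I would choose an integer $k\geq 1$ with $10^{-k}<\varepsilon$ and set $\eta=10^{-(k+1)}$, which is at most $1$. Since $d_e\leq d$ takes only integer values, the inequality $d(X,Y)<\eta\leq 1$ forces $d_e(E,\check{E})=0$, that is $E=\check{E}$; this discreteness of the Hamming metric is exactly what makes the Boolean coordinate trivially continuous and is why no regularity hypothesis on $f$ is needed. At the same time $d_s(S,\check{S})<10^{-(k+1)}$, so by the prefix property recalled above $S$ and $\check{S}$ share their first $k+1$ terms; in particular they agree on the initial term selected by $i$, giving $i(S)=i(\check{S})$. As $F_f$ is an ordinary map of the pair $\bigl(i(S),E\bigr)$, the equalities $E=\check{E}$ and $i(S)=i(\check{S})$ yield $F_f(i(S),E)=F_f(i(\check{S}),\check{E})$, so the Boolean term of the image distance is exactly $0$.

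It then remains to bound $d_s\bigl(\sigma(S),\sigma(\check{S})\bigr)$, and here lies the only genuine content, namely the behaviour of the shift. Because $\sigma$ deletes the first symbol (so that $\sigma(S)^n=S^{n+1}$), a common prefix of length $k+1$ for $(S,\check{S})$ becomes a common prefix of length $k$ for $(\sigma(S),\sigma(\check{S}))$. Applying the prefix property once more gives $d_s\bigl(\sigma(S),\sigma(\check{S})\bigr)<10^{-k}<\varepsilon$. Combining the two estimates yields $d\bigl(G_f(X),G_f(Y)\bigr)=0+d_s\bigl(\sigma(S),\sigma(\check{S})\bigr)<\varepsilon$, which establishes continuity at the arbitrary point $X$.

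I expect no serious obstacle: the argument is essentially bookkeeping with the two-part metric. The single point demanding care is the one-term loss of common prefix caused by $\sigma$, which must be absorbed by taking $\eta$ one power of $10$ smaller than the value dictated by $\varepsilon$ (hence the passage from $k$ to $k+1$ above); verifying that the normalising constant $9/\mathsf{N}$ really reproduces the clean ``common prefix of length $k \Leftrightarrow d_s<10^{-k}$'' statement is the only quantitative check, and it is precisely the reason that constant was chosen. The conceptual observation worth flagging is that $G_f$ inherits its continuity entirely from the shift together with the discreteness of $\mathds{B}^\mathsf{N}$, so the result holds for an arbitrary $f:\mathds{B}^\mathsf{N}\to\mathds{B}^\mathsf{N}$ with no assumption whatsoever on $f$.
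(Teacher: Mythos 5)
Your proof is correct and takes essentially the same route as the paper's: the paper phrases the argument via sequential continuity while you give a direct $\varepsilon$--$\eta$ argument (incidentally yielding uniform continuity), but both rest on exactly the same two facts, namely that the integer-valued Hamming part $d_e$ forces $E=\check{E}$, and that a small $d_s$ forces a common strategy prefix, so the image states coincide under $F_f$ and the shifted strategies lose only one term of agreement, absorbed by one extra factor of $10^{-1}$. The one caveat --- shared with the paper, whose own proof relies on the same stated prefix property --- is that the direction ``$d_s<10^{-k}$ implies the first $k$ terms agree'' is only valid as stated when $\mathsf{N}\leqslant 9$, since a single discrepancy at position $j$ only guarantees $d_s\geqslant \frac{9}{\mathsf{N}}10^{-j}$; this is harmless for your argument (and theirs) provided the threshold is scaled down to $\frac{9}{\mathsf{N}}10^{-(k+1)}$, exactly the kind of quantitative check you flagged at the end.
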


The chaotic property of $G_f$ has been firstly established for the vectorial
Boolean negation $f(x_1,\hdots, x_\mathsf{N}) =  (\overline{x_1},\hdots, \overline{x_\mathsf{N}})$ \cite{guyeux10}. To obtain a characterization, we have secondly
introduced the notion of asynchronous iteration graph recalled bellow.

Let $f$ be a map from $\mathds{B}^\mathsf{N}$ to itself. The
{\emph{asynchronous iteration graph}} associated with $f$ is the
directed graph $\Gamma(f)$ defined by: the set of vertices is
$\mathds{B}^\mathsf{N}$; for all $x\in\mathds{B}^\mathsf{N}$ and 
$i\in \llbracket1;\mathsf{N}\rrbracket$,
the graph $\Gamma(f)$ contains an arc from $x$ to $F_f(i,x)$. 
The relation between $\Gamma(f)$ and $G_f$ is clear: there exists a
path from $x$ to $x'$ in $\Gamma(f)$ if and only if there exists a
strategy $s$ such that the parallel iteration of $G_f$ from the
initial point $(s,x)$ reaches the point $x'$.
We have then proven in \cite{bcgr11:ip} that,

\begin{theorem}
\label{Th:Caractérisation   des   IC   chaotiques}  
Let $f:\mathds{B}^\mathsf{N}\to\mathds{B}^\mathsf{N}$. $G_f$ is chaotic  (according to  Devaney) 
if and only if $\Gamma(f)$ is strongly connected.
\end{theorem}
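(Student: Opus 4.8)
The plan is to prove the two implications separately, exploiting throughout the structural fact recalled just above the statement: the $E$-component of $G_f^{k}(S,x)$ equals $x'$ for some strategy $S$ precisely when $\Gamma(f)$ contains a path of length $k$ from $x$ to $x'$. I would first record the geometric meaning of the metric $d$. Since $d_e$ is integer-valued (it is the Hamming distance), any ball of radius $<1$ around a point $(S,E)$ forces the second coordinate to be exactly $E$, while it constrains only a finite prefix of the strategy; conversely, two points with equal $E$-components whose strategies share a long common prefix are close, because the fractional part of $d$ is $<10^{-k}$ as soon as the first $k$ strategy terms agree. These observations are what let me translate topological statements about $G_f$ into combinatorial statements about $\Gamma(f)$, and back.

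For the easy direction, assume $G_f$ is chaotic, hence in particular topologically transitive. Fixing $x,x'\in\mathds{B}^{\mathsf{N}}$, I would take $U$ and $V$ to be the balls of radius $\tfrac12$ centred at $(S_0,x)$ and $(S_1,x')$ for arbitrary strategies $S_0,S_1$. By the remark on the metric, every point of $U$ has $E$-component $x$ and every point of $V$ has $E$-component $x'$. Transitivity then yields some $k>0$ and a point $(S,x)\in U$ with $G_f^{k}(S,x)\in V$; reading off the second coordinate shows that the chaotic iteration of $x$ under the prefix $S^0,\dots,S^{k-1}$ equals $x'$, i.e. there is a path from $x$ to $x'$ in $\Gamma(f)$. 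As $x,x'$ were arbitrary, $\Gamma(f)$ is strongly connected.

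For the converse, assume $\Gamma(f)$ strongly connected and establish regularity and transitivity, the core construction in both cases being the concatenation of strategy blocks. To prove transitivity between balls around $(S,E)$ and $(\check S,\check E)$ with prefix lengths $k_1,k_2$, I would (i) copy the first $k_1$ terms of $S$, driving $E$ to some state $E'$; (ii) insert a finite block realising, via strong connectivity, a path in $\Gamma(f)$ from $E'$ to $\check E$; and (iii) append the terms $\check S^0,\dots,\check S^{k_2-1}$. Taking $k$ to be the total length of the first two blocks, the constructed point lies in the first ball, and, because $\sigma^{k}$ merely reads off the tail of the strategy, its $k$-th image has $E$-component $\check E$ and a strategy beginning with $\check S^0,\dots,\check S^{k_2-1}$, hence lies in the second ball; one only needs to use a nontrivial connecting path so that $k>0$. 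For regularity, given $(S,E)$ and $\varepsilon>0$, I would set $\tilde E=E$, copy a long prefix $S^0,\dots,S^{k-1}$ reaching a state $E''$, use strong connectivity to append a finite block returning $E''$ to $E$, and then repeat this whole finite block periodically; this yields a point that is periodic for $G_f$, whose strategy agrees with $S$ on the first $k$ terms and whose $E$-component is $E$, hence a periodic point within $\varepsilon$ of $(S,E)$ once $k$ is large enough.

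The routine parts are the continuity input (already available) and the bookkeeping that a small ball fixes $E$ and a strategy prefix. The step needing the most care is the converse direction's strategy surgery: checking that concatenating the alignment prefixes with a graph path produces, after exactly the intended number of shifts, a point in the target ball, and---for regularity---that the infinitely repeated block is genuinely a fixed point of a power of $G_f$ rather than a trajectory that merely returns close to its start. This is precisely where strong connectivity is indispensable, since it is what guarantees that the required connecting blocks exist for every pair of intermediate Boolean states.
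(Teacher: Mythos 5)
Your proof is correct, but there is nothing to compare it against line-by-line: the paper does not prove Theorem~\ref{Th:Caractérisation des IC chaotiques} at all, it recalls it from~\cite{bcgr11:ip}. The closest argument the paper does contain is the proof of Theorem~\ref{t:chaos des general} (via Lemma~\ref{strongTrans}), which concerns the generalized iterations whose strategy terms are \emph{subsets} of $\llbracket 1;\mathsf{N}\rrbracket$ with $f$ the vectorial negation, and your ``strongly connected $\Rightarrow$ chaotic'' direction parallels it closely: transitivity by strategy surgery (prefix of $S$, connecting block, prefix of $\check S$), then regularity by an infinitely repeated block that is a genuine fixed point of a power of $G_f$ --- essentially the paper's construction $\tilde S=(S_0,\dots,S_{t_1-1},S'_0,\dots,S'_{t_2-1},S_0,\dots)$. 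The substantive difference is where the connecting block comes from: in the paper's lemma, strong transitivity is trivial because with set-valued strategy terms and the negation one can jump from any state to any other in a \emph{single} iteration (flip exactly the differing coordinates), whereas in the present single-coordinate setting with arbitrary $f$ no such one-step jump exists, and your use of a path supplied by strong connectivity of $\Gamma(f)$ is exactly the right replacement; your attention to keeping $k>0$ via a nontrivial cycle, and to the periodic point being an actual fixed point of $G_f^{t_1+t_2}$ rather than a recurrent trajectory, addresses the two places where such an argument could silently fail. Your converse direction (chaotic $\Rightarrow$ strongly connected), pinning the Boolean component with balls of radius $\tfrac12$ (legitimate since $d_e$ is integer-valued) and reading a $\Gamma(f)$-path off a transitivity witness, has no counterpart anywhere in the paper but is correct and is what makes the statement a characterization rather than a sufficient condition.
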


Finally, we have established in \cite{bcgr11:ip} that,
\begin{theorem}
  Let $f: \mathds{B}^{n} \rightarrow \mathds{B}^{n}$, $\Gamma(f)$ its
  iteration graph, $\check{M}$ its adjacency
  matrix and $M$
  a $n\times n$ matrix defined by 
  $
  M_{ij} = \frac{1}{n}\check{M}_{ij}$ 
  if $i \neq j$ and  
  $M_{ii} = 1 - \frac{1}{n} \sum\limits_{j=1, j\neq i}^n \check{M}_{ij}$ otherwise.
  
  If $\Gamma(f)$ is strongly connected, then 
  the output of the PRNG detailed in Algorithm~\ref{CI Algorithm} follows 
  a law that tends to the uniform distribution 
  if and only if $M$ is a double stochastic matrix.
\end{theorem}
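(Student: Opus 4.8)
The plan is to recognise $M$ as the transition matrix of the Markov chain that governs the PRNG, and to reduce the statement to two classical facts about such chains: that an irreducible, aperiodic chain admits a unique stationary distribution toward which all its iterates converge, and that the uniform law is stationary exactly when the transition matrix is doubly stochastic.

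First I would check that $M$ is a stochastic matrix. Its rows sum to $1$ by the very definition of the diagonal entries $M_{ii}=1-\frac{1}{n}\sum_{j\neq i}\check{M}_{ij}$, and all entries are nonnegative because a vertex of $\Gamma(f)$ has at most $n$ out-neighbours distinct from itself, so the subtracted quantity $\frac{1}{n}\sum_{j\neq i}\check{M}_{ij}$ never exceeds $1$. Probabilistically, $M_{ij}$ is the probability that one step of the algorithm — drawing a component uniformly in $\llbracket 1;n\rrbracket$ and applying $F_f$ — sends state $i$ to state $j$, the diagonal term collecting the updates that leave the state fixed. Hence, starting from an initial law $\pi_0$ (a row vector over the $2^n$ states), the distribution of the output after $k$ steps is exactly $\pi_0 M^k$, and the phrase ``the output follows a law that tends to the uniform distribution'' means $\pi_0 M^k \to \pi_u$, where $\pi_u=\frac{1}{2^n}(1,\dots,1)$.

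Next I would exploit the correspondence between $\Gamma(f)$ and the chain recalled just before the characterisation theorem: there is a path from $x$ to $x'$ in $\Gamma(f)$ iff some strategy drives $G_f$ from $x$ to $x'$, so $\Gamma(f)$ is strongly connected iff $M$ is irreducible. By the Perron--Frobenius theorem applied to the nonnegative irreducible matrix $M$, the eigenvalue $1$ is simple and carries a unique positive stationary law $\pi$ with $\pi M=\pi$. For the equivalence: $(\Leftarrow)$ if $M$ is doubly stochastic then $\pi_u M=\pi_u$, since the $j$-th coordinate of $\pi_u M$ is $\frac{1}{2^n}\sum_i M_{ij}=\frac{1}{2^n}$ (the column sum being $1$), so uniqueness forces $\pi=\pi_u$ and the output tends to uniform; $(\Rightarrow)$ if $\pi_0 M^k \to \pi_u$, then passing to the limit in $\pi_0 M^{k+1}=(\pi_0 M^k)M$ gives $\pi_u M=\pi_u$, that is, every column of $M$ sums to $1$, which together with row-stochasticity is exactly double stochasticity.

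The step I expect to be the main obstacle is the convergence $\pi_0 M^k\to\pi$ itself, because irreducibility alone does not suffice: a \emph{periodic} chain — for instance the one attached to the vectorial negation, whose graph is the bipartite hypercube — has an irreducible, doubly stochastic $M$ yet an oscillating, non-convergent sequence of output laws. The genuine work is therefore to secure aperiodicity, e.g. by exhibiting a self-loop $M_{ii}>0$ (a component whose update fixes state $i$, which occurs unless $f(i)=\overline{i}$), or, where this fails, by reading ``tends to the uniform distribution'' as convergence of the time-averaged (Ces\`aro) empirical distribution, which holds for every irreducible chain and still yields the stated equivalence. I would make this reading explicit, since it is what keeps the ``if and only if'' correct in full generality.
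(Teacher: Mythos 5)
A preliminary remark: the paper itself contains no proof of this statement — it is imported from~\cite{bcgr11:ip} (``we have established in [that reference] that\dots''), so there is no in-paper argument to compare yours against. Judged on its own merits, your proposal takes the right route. Identifying $M$ (which, note, is really a $2^n\times 2^n$ matrix indexed by states, despite the theorem's wording) as the transition matrix of the chain ``draw a coordinate uniformly in $\llbracket 1;n\rrbracket$, apply $F_f$'' is exactly the right reading of Algorithm~\ref{CI Algorithm}; your verification that $M$ is row-stochastic, that off-diagonal multiplicities cannot occur (distinct updated coordinates give distinct images, so at most $n$ out-neighbours), that strong connectivity of $\Gamma(f)$ is irreducibility of $M$, and that stationarity of the uniform law is precisely column-stochasticity, are all sound, as is the uniqueness of the stationary law via Perron--Frobenius.

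The step you flag as the obstacle is indeed the crux, and your counterexample is the decisive one: for the vectorial negation, $\Gamma(f)$ is the $n$-cube, which is bipartite, so the chain has period $2$ and $\pi_0M^k$ does not converge — yet the paper explicitly asserts that the negation satisfies the hypotheses of this theorem. So any proof that needs aperiodicity, or a self-loop $M_{ii}>0$, cannot be the intended one. Where your write-up should be sharpened is in how the periodicity is neutralized: it is not a matter of charitably ``reading'' the statement as Ces\`aro convergence, because the averaging is already built into the algorithm. In Algorithm~\ref{CI Algorithm} the number of executed iterations is $k \leftarrow b + \textit{XORshift}(b)$, i.e.\ uniformly distributed over an interval of length about $b$; hence the law of the output is not $\pi_0M^k$ for a deterministic $k$ but the window average $\frac{1}{b+1}\sum_{k}\pi_0M^k$ over that interval, and such averages over windows of growing length converge to the unique stationary law of \emph{any} irreducible finite chain, whatever its period (write the window sum as a difference of two Ces\`aro sums). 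With this observation both of your directions go through verbatim: if $M$ is doubly stochastic the uniform law is stationary and the averaged output law tends to it; conversely the averaged law always tends to the unique stationary $\pi$, so if it also tends to the uniform law then $\pi=\pi_u$ and the column sums of $M$ equal $1$. In short: right structure, right diagnosis, and the only missing piece is to ground the averaging in the algorithm's random iteration count rather than in a reinterpretation of ``tends to.''
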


These results of chaos and uniform distribution have led us to study the possibility of building a
pseudorandom number generator (PRNG) based on the chaotic iterations. 
As $G_f$, defined on the domain   $\llbracket 1 ;  \mathsf{N} \rrbracket^{\mathds{N}} 
\times \mathds{B}^\mathsf{N}$, is built from Boolean networks $f : \mathds{B}^\mathsf{N}
\rightarrow \mathds{B}^\mathsf{N}$, we can preserve the theoretical properties on $G_f$
during implementations (due to the discrete nature of $f$). Indeed, it is as if
$\mathds{B}^\mathsf{N}$ represents the memory of the computer whereas $\llbracket 1 ;  \mathsf{N}
\rrbracket^{\mathds{N}}$ is its input stream (the seeds, for instance, in PRNG, or a physical noise in TRNG).
Let us finally remark that the vectorial negation satisfies the hypotheses of both theorems above.

\section{Application to Pseudorandomness}
\label{sec:pseudorandom}

\subsection{A First Pseudorandom Number Generator}

We have proposed in~\cite{bgw09:ip} a new family of generators that receives 
two PRNGs as inputs. These two generators are mixed with chaotic iterations, 
leading thus to a new PRNG that improves the statistical properties of each
generator taken alone. Furthermore, our generator 
possesses various chaos properties that none of the generators used as input
present.

\begin{algorithm}[h!]
\KwIn{a function $f$, an iteration number $b$, an initial configuration $x^0$
($n$ bits)}
\KwOut{a configuration $x$ ($n$ bits)}
$x\leftarrow x^0$\;
$k\leftarrow b + \textit{XORshift}(b)$\;
\For{$i=0,\dots,k$}
{
$s\leftarrow{\textit{XORshift}(n)}$\;
$x\leftarrow{F_f(s,x)}$\;
}
return $x$\;
\caption{PRNG with chaotic functions}
\label{CI Algorithm}
\end{algorithm}

\begin{algorithm}[h!]
\KwIn{the internal configuration $z$ (a 32-bit word)}
\KwOut{$y$ (a 32-bit word)}
$z\leftarrow{z\oplus{(z\ll13)}}$\;
$z\leftarrow{z\oplus{(z\gg17)}}$\;
$z\leftarrow{z\oplus{(z\ll5)}}$\;
$y\leftarrow{z}$\;
return $y$\;
\medskip
\caption{An arbitrary round of \textit{XORshift} algorithm}
\label{XORshift}
\end{algorithm}

This generator is synthesized in Algorithm~\ref{CI Algorithm}.
It takes as input: a Boolean function $f$ satisfying Theorem~\ref{Th:Caractérisation   des   IC   chaotiques};
an integer $b$, ensuring that the number of executed iterations is at least $b$
and at most $2b+1$; and an initial configuration $x^0$.
It returns the new generated configuration $x$.  Internally, it embeds two
\textit{XORshift}$(k)$ PRNGs~\cite{Marsaglia2003} that return integers
uniformly distributed
into $\llbracket 1 ; k \rrbracket$.
\textit{XORshift} is a category of very fast PRNGs designed by George Marsaglia,
which repeatedly uses the transform of exclusive or (XOR, $\oplus$) on a number
with a bit shifted version of it. This PRNG, which has a period of
$2^{32}-1=4.29\times10^9$, is summed up in Algorithm~\ref{XORshift}. It is used
in our PRNG to compute the strategy length and the strategy elements.

This former generator has successively passed various batteries of statistical tests, as the NIST~\cite{bcgr11:ip}, DieHARD~\cite{Marsaglia1996}, and TestU01~\cite{LEcuyerS07} ones.

\subsection{Improving the Speed of the Former Generator}

Instead of updating only one cell at each iteration, we can try to choose a
subset of components and to update them together. Such an attempt leads
to a kind of merger of the two sequences used in Algorithm 
\ref{CI Algorithm}. When the updating function is the vectorial negation,
this algorithm can be rewritten as follows:

\begin{equation}
\left\{
\begin{array}{l}
x^0 \in \llbracket 0, 2^\mathsf{N}-1 \rrbracket, S \in \llbracket 0, 2^\mathsf{N}-1 \rrbracket^\mathds{N} \\
\forall n \in \mathds{N}^*, x^n = x^{n-1} \oplus S^n,
\end{array}
\right.
\label{equation Oplus}
\end{equation}
where $\oplus$ is for the bitwise exclusive or between two integers. 
This rewriting can be understood as follows. The $n-$th term $S^n$ of the
sequence $S$, which is an integer of $\mathsf{N}$ binary digits, presents
the list of cells to update in the state $x^n$ of the system (represented
as an integer having $\mathsf{N}$ bits too). More precisely, the $k-$th 
component of this state (a binary digit) changes if and only if the $k-$th 
digit in the binary decomposition of $S^n$ is 1.

The single basic component presented in Eq.~\ref{equation Oplus} is of 
ordinary use as a good elementary brick in various PRNGs. It corresponds
to the following discrete dynamical system in chaotic iterations:

\begin{equation}
\forall    n\in     \mathds{N}^{\ast     },    \forall     i\in
\llbracket1;\mathsf{N}\rrbracket ,x_i^n=\left\{
\begin{array}{ll}
  x_i^{n-1} &  \text{ if  } i \notin \mathcal{S}^n \\
  \left(f(x^{n-1})\right)_{S^n} & \text{ if }i \in \mathcal{S}^n.
\end{array}\right.
\label{eq:generalIC}
\end{equation}
where $f$ is the vectorial negation and $\forall n \in \mathds{N}$, 
$\mathcal{S}^n \subset \llbracket 1, \mathsf{N} \rrbracket$ is such that
$k \in \mathcal{S}^n$ if and only if the $k-$th digit in the binary
decomposition of $S^n$ is 1. Such chaotic iterations are more general
than the ones presented in Definition \ref{Def:chaotic iterations} because, instead of updating only one term at each iteration,
we select a subset of components to change.

Obviously, replacing Algorithm~\ref{CI Algorithm} by 
Equation~\ref{equation Oplus}, which is possible when the iteration function is
the vectorial negation, leads to a speed improvement. However, proofs
of chaos obtained in~\cite{bg10:ij} have been established
only for chaotic iterations of the form presented in Definition 
\ref{Def:chaotic iterations}. The question is now to determine whether the
use of more general chaotic iterations to generate pseudorandom numbers 
faster, does not deflate their topological chaos properties.

\subsection{Proofs of Chaos of the General Formulation of the Chaotic Iterations}
\label{deuxième def}
Let us consider the discrete dynamical systems in chaotic iterations having 
the general form:

\begin{equation}
\forall    n\in     \mathds{N}^{\ast     },    \forall     i\in
\llbracket1;\mathsf{N}\rrbracket ,x_i^n=\left\{
\begin{array}{ll}
  x_i^{n-1} &  \text{ if  } i \notin \mathcal{S}^n \\
  \left(f(x^{n-1})\right)_{S^n} & \text{ if }i \in \mathcal{S}^n.
\end{array}\right.
\label{general CIs}
\end{equation}

In other words, at the $n^{th}$ iteration, only the cells whose id is
contained into the set $S^{n}$ are iterated.

Let us now rewrite these general chaotic iterations as usual discrete dynamical
system of the form $X^{n+1}=f(X^n)$ on an ad hoc metric space. Such a formulation
is required in order to study the topological behavior of the system.

Let us introduce the following function:
\begin{equation}
\begin{array}{cccc}
 \chi: & \llbracket 1; \mathsf{N} \rrbracket \times \mathcal{P}\left(\llbracket 1; \mathsf{N} \rrbracket\right) & \longrightarrow & \mathds{B}\\
         & (i,X) & \longmapsto  & \left\{ \begin{array}{ll} 0 & \textrm{if }i \notin X, \\ 1 & \textrm{if }i \in X,  \end{array}\right.
\end{array} 
\end{equation}
where $\mathcal{P}\left(X\right)$ is for the powerset of the set $X$, that is, $Y \in \mathcal{P}\left(X\right) \Longleftrightarrow Y \subset X$.

Given a function $f:\mathds{B}^\mathsf{N} \longrightarrow \mathds{B}^\mathsf{N} $, define the function:
\begin{equation}
\begin{array}{lrll}
F_{f}: & \mathcal{P}\left(\llbracket1;\mathsf{N}\rrbracket \right) \times \mathds{B}^{\mathsf{N}} &
\longrightarrow & \mathds{B}^{\mathsf{N}} \\
& (P,E) & \longmapsto & \left( E_{j}.\chi (j,P)+f(E)_{j}.\overline{\chi
(j,P)}\right) _{j\in \llbracket1;\mathsf{N}\rrbracket},%
\end{array}%
\end{equation}%
where + and . are the Boolean addition and product operations, and $\overline{x}$ 
is the negation of the Boolean $x$.
Consider the phase space:
\begin{equation}
\mathcal{X} = \mathcal{P}\left(\llbracket 1 ; \mathsf{N} \rrbracket\right)^\mathds{N} \times
\mathds{B}^\mathsf{N},
\end{equation}
\noindent and the map defined on $\mathcal{X}$:
\begin{equation}
G_f\left(S,E\right) = \left(\sigma(S), F_f(i(S),E)\right), \label{Gf}
\end{equation}
\noindent where $\sigma$ is the \emph{shift} function defined by $\sigma
(S^{n})_{n\in \mathds{N}}\in \mathcal{P}\left(\llbracket 1 ; \mathsf{N} \rrbracket\right)^\mathds{N}\longrightarrow (S^{n+1})_{n\in
\mathds{N}}\in \mathcal{P}\left(\llbracket 1 ; \mathsf{N} \rrbracket\right)^\mathds{N}$ and $i$ is the \emph{initial function} 
$i:(S^{n})_{n\in \mathds{N}} \in \mathcal{P}\left(\llbracket 1 ; \mathsf{N} \rrbracket\right)^\mathds{N}\longrightarrow S^{0}\in \mathcal{P}\left(\llbracket 1 ; \mathsf{N} \rrbracket\right)$. 
Then the general chaotic iterations defined in Equation \ref{general CIs} can 
be described by the following discrete dynamical system:
\begin{equation}
\left\{
\begin{array}{l}
X^0 \in \mathcal{X} \\
X^{k+1}=G_{f}(X^k).%
\end{array}%
\right.
\end{equation}%

Once more, a shift function appears as a component of these general chaotic 
iterations. 

To study the Devaney's chaos property, a distance between two points 
$X = (S,E), Y = (\check{S},\check{E})$ of $\mathcal{X}$ must be defined.
Let us introduce:
\begin{equation}
d(X,Y)=d_{e}(E,\check{E})+d_{s}(S,\check{S}),
\label{nouveau d}
\end{equation}
\noindent where
\begin{equation}
\left\{
\begin{array}{lll}
\displaystyle{d_{e}(E,\check{E})} & = & \displaystyle{\sum_{k=1}^{\mathsf{N}%
}\delta (E_{k},\check{E}_{k})}\textrm{ is once more the Hamming distance}, \\
\displaystyle{d_{s}(S,\check{S})} & = & \displaystyle{\dfrac{9}{\mathsf{N}}%
\sum_{k=1}^{\infty }\dfrac{|S^k\Delta {S}^k|}{10^{k}}}.%
\end{array}%
\right.
\end{equation}
where $|X|$ is the cardinality of a set $X$ and $A\Delta B$ is for the symmetric difference, defined for sets A, B as
$A\,\Delta\,B = (A \setminus B) \cup (B \setminus A)$.

\begin{proposition}
The function $d$ defined in Eq.~\ref{nouveau d} is a metric on $\mathcal{X}$.
\end{proposition}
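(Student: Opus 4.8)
The plan is to exploit the product structure of $\mathcal{X}=\mathcal{P}\left(\llbracket 1;\mathsf{N}\rrbracket\right)^\mathds{N}\times\mathds{B}^\mathsf{N}$. Since $d$ is written as the sum $d_e+d_s$, where $d_e$ depends only on the Boolean part and $d_s$ only on the strategy part, it suffices to check that $d_e$ is a metric on $\mathds{B}^\mathsf{N}$ and that $d_s$ is a metric on $\mathcal{P}\left(\llbracket 1;\mathsf{N}\rrbracket\right)^\mathds{N}$: the sum of two metrics on the respective factors of a Cartesian product is always a metric on the product. Concretely, I would verify the four metric axioms for $d$ by splitting each of them into its $E$-part and its $S$-part, so that the corresponding properties of $d_e$ and $d_s$ combine additively.

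For $d_e$ there is nothing new: this is exactly the Hamming distance on $\mathds{B}^\mathsf{N}$ already used in Section~\ref{sec:chaotic iterations}. I would simply recall that $\delta$ is the discrete metric on $\mathds{B}$ and that a finite coordinatewise sum of copies of a discrete metric is itself a metric, so $d_e$ satisfies all four axioms.

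For $d_s$, non-negativity is clear since cardinalities and the weights $\tfrac{9}{\mathsf{N}\,10^{k}}$ are non-negative, and symmetry follows from $A\,\Delta\,B=B\,\Delta\,A$. For the separation axiom I would note that, all terms being non-negative, $d_s(S,\check{S})=0$ forces $|S^k\,\Delta\,\check{S}^k|=0$ for every $k$, i.e. $S^k=\check{S}^k$ for all $k$, hence $S=\check{S}$. Convergence of the series is guaranteed because $|S^k\,\Delta\,\check{S}^k|\le\mathsf{N}$, so each summand is bounded by $\tfrac{9}{10^{k}}$ and the series is dominated by the convergent geometric series $\sum_{k\ge 1} 9\cdot 10^{-k}=1$; this also shows $d$ is finite-valued.

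The one step that carries the real content is the triangle inequality for $d_s$, and I expect it to be the main obstacle. I would first establish the termwise set inequality $|A\,\Delta\,C|\le|A\,\Delta\,B|+|B\,\Delta\,C|$, which follows from the inclusion $A\,\Delta\,C\subseteq(A\,\Delta\,B)\cup(B\,\Delta\,C)$ (an element of $A\,\Delta\,C$ lies in exactly one of $A,C$, and depending on whether it belongs to $B$ it lands in $A\,\Delta\,B$ or in $B\,\Delta\,C$) together with subadditivity of cardinality on finite sets. Applying this with $A=S^k$, $B=\check{S}^k$, $C=\tilde{S}^k$ for a third point $Z=(\tilde{S},\tilde{E})$, then multiplying by the positive weight $\tfrac{9}{\mathsf{N}\,10^{k}}$ and summing over $k$, yields $d_s(S,\tilde{S})\le d_s(S,\check{S})+d_s(\check{S},\tilde{S})$. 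Combining this with the standard triangle inequality for the Hamming distance $d_e$ and adding the two gives $d(X,Z)\le d(X,Y)+d(Y,Z)$, which completes the verification. Everything apart from the symmetric-difference inequality is routine.
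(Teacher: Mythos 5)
Your proof is correct and follows the same skeleton as the paper's: decompose $d=d_e+d_s$, take the Hamming distance $d_e$ as already known to be a metric, and reduce everything to showing that $d_s$ is a metric, the only real content being the triangle inequality, which both you and the paper obtain from the termwise inequality $|A\,\Delta\,C|\leqslant|A\,\Delta\,B|+|B\,\Delta\,C|$. Where you differ is the justification of that inequality: the paper argues algebraically, using associativity of $\Delta$ and $B\,\Delta\,B=\varnothing$ to write $A\,\Delta\,C=(A\,\Delta\,B)\,\Delta\,(B\,\Delta\,C)$ and then applying $|X\,\Delta\,Y|\leqslant |X|+|Y|$; you instead prove the inclusion $A\,\Delta\,C\subseteq(A\,\Delta\,B)\cup(B\,\Delta\,C)$ by a membership case analysis and invoke subadditivity of cardinality. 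Both are one-line arguments; the paper's exploits the group structure of $\left(\mathcal{P}\left(\llbracket 1;\mathsf{N}\rrbracket\right),\Delta\right)$, while yours is more elementary and self-contained. Your write-up is in fact slightly more careful on two points the paper glosses over: you explicitly pass from the single-set inequality to sequences of sets by weighting each index $k$ and summing (the paper states this step for subsets of $\llbracket 1;\mathsf{N}\rrbracket$ rather than for sequences of subsets), and you verify that the series defining $d_s$ converges, being dominated by $\sum_{k\geqslant 1}9\cdot 10^{-k}=1$, so that $d$ is finite-valued.
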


\begin{proof}
 $d_e$ is the Hamming distance. We will prove that $d_s$ is a distance
too, thus $d$, as being the sum of two distances, will also be a distance.
 \begin{itemize}
\item Obviously, $d_s(S,\check{S})\geqslant 0$, and if $S=\check{S}$, then 
$d_s(S,\check{S})=0$. Conversely, if $d_s(S,\check{S})=0$, then 
$\forall k \in \mathds{N}, |S^k\Delta {S}^k|=0$, and so $\forall k, S^k=\check{S}^k$.
 \item $d_s$ is symmetric 
($d_s(S,\check{S})=d_s(\check{S},S)$) due to the commutative property
of the symmetric difference. 
\item Finally, $|S \Delta S''| = |(S \Delta \varnothing) \Delta S''|= |S \Delta (S'\Delta S') \Delta S''|= |(S \Delta S') \Delta (S' \Delta S'')|\leqslant |S \Delta S'| + |S' \Delta S''|$, 
and so for all subsets $S,S',$ and $S''$ of $\llbracket 1, \mathsf{N} \rrbracket$, 
we have $d_s(S,S'') \leqslant d_e(S,S')+d_s(S',S'')$, and the triangle
inequality is obtained.
 \end{itemize}
\end{proof}

Before being able to study the topological behavior of the general 
chaotic iterations, we must first establish that:

\begin{proposition}
 For all $f:\mathds{B}^\mathsf{N} \longrightarrow \mathds{B}^\mathsf{N} $, the function $G_f$ is continuous on 
$\left( \mathcal{X},d\right)$.
\end{proposition}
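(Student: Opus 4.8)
The plan is to prove continuity pointwise through a direct $\varepsilon$--$\delta$ argument, exploiting the hybrid (discrete $\times$ sequential) structure of $\mathcal{X}$. Fix a point $X=(S,E)$ together with a target $\varepsilon>0$, and let $Y=(\check S,\check E)$ be a competing point. Since
\[
d\bigl(G_f(X),G_f(Y)\bigr)=d_e\bigl(F_f(i(S),E),F_f(i(\check S),\check E)\bigr)+d_s\bigl(\sigma(S),\sigma(\check S)\bigr),
\]
the idea is to control the two summands separately: I will choose $\delta$ small enough that the Hamming summand is forced to be \emph{exactly} $0$, while the strategy summand is forced strictly below $\varepsilon$. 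Note that the discreteness of $\mathds{B}^\mathsf{N}$ is precisely what makes the first goal reachable, because a nonzero Hamming distance is always at least $1$, and that $d_s$ has total mass $\tfrac{9}{\mathsf{N}}\sum_{k\ge 1}\tfrac{\mathsf{N}}{10^{k}}=1$.

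First I would handle the Boolean part. Imposing $\delta\leqslant 1$, any $Y$ with $d(X,Y)<\delta$ satisfies $d_e(E,\check E)<1$; as $d_e$ is integer-valued this forces $E=\check E$. Next, since $F_f(P,E)$ depends on the strategy only through the single set $P=i(S)$, it suffices to guarantee $i(S)=i(\check S)$, i.e.\ agreement of the two strategies on their first (consumed) term. By the digit-by-digit reading of $d_s$ recalled in the text, this is secured by shrinking $\delta$ below the weight $\tfrac{9}{10\mathsf{N}}$ carried by that leading term. With $E=\check E$ and $i(S)=i(\check S)$ we obtain $F_f(i(S),E)=F_f(i(\check S),\check E)$, so the Hamming summand vanishes identically.

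For the strategy part I would use the geometric decay of the weights. Given $\varepsilon$, pick $k_0\in\mathds{N}$ with $10^{-k_0}<\varepsilon$. Requiring $S$ and $\check S$ to coincide on their first $k_0+1$ terms makes $\sigma(S)$ and $\sigma(\check S)$ coincide on their first $k_0$ terms, so that $d_s\bigl(\sigma(S),\sigma(\check S)\bigr)$ collapses to a tail bounded by $\tfrac{9}{\mathsf{N}}\sum_{k>k_0}\tfrac{\mathsf{N}}{10^{k}}=10^{-k_0}<\varepsilon$. This agreement is again purchased by taking $\delta$ below the cumulative weight of those initial terms. Setting $\delta$ equal to the minimum of the three thresholds (the $\leqslant 1$ bound, the first-term bound, and the $k_0$-term bound) completes the construction, since then $d(X,Y)<\delta$ yields $d\bigl(G_f(X),G_f(Y)\bigr)=0+d_s(\sigma(S),\sigma(\check S))<\varepsilon$.

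The obstacles here are bookkeeping rather than conceptual. The main one is the index shift: because $\sigma$ discards the first coordinate, controlling $d_s(\sigma(S),\sigma(\check S))$ up to index $k_0$ requires controlling $S,\check S$ up to index $k_0+1$, and one must check that the term $i(S)$ consumed by $F_f$ is exactly the term that $d_s$ weights first, so that a single smallness hypothesis on $d(X,Y)$ simultaneously annihilates the Hamming part and tames the strategy tail. Establishing the equivalence ``$d_s$ small $\iff$ finitely many leading terms equal'' needs no new estimate, being immediate from the characterisation of the floating part of $d$ already stated; the argument is structurally identical to the one for the original $G_f$, with the symmetric difference $\Delta$ simply playing the role that $|S^{k}-\check S^{k}|$ played in the single-cell setting.
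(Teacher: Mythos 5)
Your proof is correct, and it rests on exactly the same three observations as the paper's: integrality of the Hamming component forces $E=\check E$; smallness of $d_s$ forces agreement of the leading strategy term, hence equality of the two $F_f$-images; and the geometric tail of $d_s$ controls the distance between the shifted strategies. The difference is the packaging. The paper proves \emph{sequential} continuity (take $(S^n,E^n)\to(S,E)$ and produce ranks $n_0,n_1,n_2$ beyond which each effect kicks in), whereas you run a direct $\varepsilon$--$\delta$ argument with explicit thresholds. These are dual formulations of the same estimate, but your version does buy two things the paper's does not. First, your $\delta$ depends only on $\varepsilon$ and $\mathsf{N}$, never on the base point, so you have in fact established \emph{uniform} continuity of $G_f$. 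Second, your threshold $\tfrac{9}{10\mathsf{N}}$ is the correct rescaling for the symmetric-difference distance: a one-element mismatch in the first term contributes only $\tfrac{9}{10\mathsf{N}}$, so the bound $d_s<10^{-1}$ used in the paper's own proof does \emph{not} force the first terms to agree once $\mathsf{N}>9$; your point that the ``first $k$ digits'' characterisation must be read with the $1/\mathsf{N}$ factor quietly repairs this. One wording slip to fix: forcing the first $k_0+1$ terms to coincide requires $\delta$ below the \emph{smallest individual} weight $\tfrac{9}{\mathsf{N}\,10^{k_0+1}}$ carried by the deepest of those terms, not below their ``cumulative weight'' --- a cumulative bound would not exclude a single mismatch sitting at a position of tiny weight. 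With that phrase corrected, the minimum of your three thresholds is simply $\tfrac{9}{\mathsf{N}\,10^{k_0+1}}$ and the argument closes as you describe.
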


\begin{proof}
We use the sequential continuity.
Let $(S^n,E^n)_{n\in \mathds{N}}$ be a sequence of the phase space $%
\mathcal{X}$, which converges to $(S,E)$. We will prove that $\left(
G_{f}(S^n,E^n)\right) _{n\in \mathds{N}}$ converges to $\left(
G_{f}(S,E)\right) $. Let us remark that for all $n$, $S^n$ is a strategy,
thus, we consider a sequence of strategies (\emph{i.e.}, a sequence of
sequences).\newline
As $d((S^n,E^n);(S,E))$ converges to 0, each distance $d_{e}(E^n,E)$ and $d_{s}(S^n,S)$ converges
to 0. But $d_{e}(E^n,E)$ is an integer, so $\exists n_{0}\in \mathds{N},$ $%
d_{e}(E^n,E)=0$ for any $n\geqslant n_{0}$.\newline
In other words, there exists a threshold $n_{0}\in \mathds{N}$ after which no
cell will change its state:
$\exists n_{0}\in \mathds{N},n\geqslant n_{0}\Rightarrow E^n = E.$

In addition, $d_{s}(S^n,S)\longrightarrow 0,$ so $\exists n_{1}\in %
\mathds{N},d_{s}(S^n,S)<10^{-1}$ for all indexes greater than or equal to $%
n_{1}$. This means that for $n\geqslant n_{1}$, all the $S^n$ have the same
first term, which is $S^0$: $\forall n\geqslant n_{1},S_0^n=S_0.$

Thus, after the $max(n_{0},n_{1})^{th}$ term, states of $E^n$ and $E$ are
identical and strategies $S^n$ and $S$ start with the same first term.\newline
Consequently, states of $G_{f}(S^n,E^n)$ and $G_{f}(S,E)$ are equal,
so, after the $max(n_0, n_1)^{th}$ term, the distance $d$ between these two points is strictly less than 1.\newline
\noindent We now prove that the distance between $\left(
G_{f}(S^n,E^n)\right) $ and $\left( G_{f}(S,E)\right) $ is convergent to
0. Let $\varepsilon >0$. \medskip
\begin{itemize}
\item If $\varepsilon \geqslant 1$, we see that the distance
between $\left( G_{f}(S^n,E^n)\right) $ and $\left( G_{f}(S,E)\right) $ is
strictly less than 1 after the $max(n_{0},n_{1})^{th}$ term (same state).
\medskip
\item If $\varepsilon <1$, then $\exists k\in \mathds{N},10^{-k}\geqslant
\varepsilon > 10^{-(k+1)}$. But $d_{s}(S^n,S)$ converges to 0, so
\begin{equation*}
\exists n_{2}\in \mathds{N},\forall n\geqslant
n_{2},d_{s}(S^n,S)<10^{-(k+2)},
\end{equation*}%
thus after $n_{2}$, the $k+2$ first terms of $S^n$ and $S$ are equal.
\end{itemize}
\noindent As a consequence, the $k+1$ first entries of the strategies of $%
G_{f}(S^n,E^n)$ and $G_{f}(S,E)$ are the same ($G_{f}$ is a shift of strategies) and due to the definition of $d_{s}$, the floating part of
the distance between $(S^n,E^n)$ and $(S,E)$ is strictly less than $%
10^{-(k+1)}\leqslant \varepsilon $.\bigskip \newline
In conclusion,
$$
\forall \varepsilon >0,\exists N_{0}=max(n_{0},n_{1},n_{2})\in \mathds{N}%
,\forall n\geqslant N_{0},
 d\left( G_{f}(S^n,E^n);G_{f}(S,E)\right)
\leqslant \varepsilon .
$$
$G_{f}$ is consequently continuous.
\end{proof}

It is now possible to study the topological behavior of the general chaotic
iterations. We will prove that,

\begin{theorem}
\label{t:chaos des general}
 The general chaotic iterations defined on Equation~\ref{general CIs} satisfy
the Devaney's property of chaos.
\end{theorem}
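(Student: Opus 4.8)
The plan is to establish Devaney's two requirements directly, namely topological transitivity and density of periodic points (regularity), since continuity of $G_f$ has just been proven. Throughout I take $f$ to be the vectorial negation, which is the update function relevant to the PRNG (and the case under which the statement holds without an extra hypothesis on $f$). The key observation is that, under negation, one iteration step with a subset $\mathcal{S} \in \mathcal{P}(\llbracket 1; \mathsf{N}\rrbracket)$ flips exactly the cells indexed by $\mathcal{S}$: for $j \in \mathcal{S}$ the new value is $f(E)_j = \overline{E_j}$, while for $j \notin \mathcal{S}$ it is unchanged. Identifying a subset with its characteristic vector in $\mathds{B}^\mathsf{N}$, this step is precisely $E \longmapsto E \oplus \mathcal{S}$. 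Consequently, for any two configurations $E, \check{E} \in \mathds{B}^\mathsf{N}$ the single subset $\mathcal{T} = \{\, j : E_j \neq \check{E}_j \,\}$ sends $E$ to $\check{E}$ in one step. This ``any configuration reaches any configuration in one move'' property powers both arguments and removes the need for the strong-connectivity hypothesis required in the single-cell characterization recalled above.

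For topological transitivity, I would take two nonempty open sets $U, V \subset \mathcal{X}$ and pick $(S,E) \in U$ and $(\check{S}, \check{E}) \in V$. Using the thresholds of $d_s$, choose an integer $k$ large enough that any point whose state is $E$ and whose strategy agrees with $S$ on its first $k$ terms still lies in $U$. Let $E^{(k)}$ be the configuration obtained by applying $S^0, \dots, S^{k-1}$ to $E$. I then define $\tilde{S}$ by keeping those first $k$ terms, taking the bridging subset $\tilde{S}^k = \{\, j : (E^{(k)})_j \neq \check{E}_j \,\}$ as the next term, and setting $\tilde{S}^{k+1+m} = \check{S}^m$ for all $m \geqslant 0$. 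Then $(\tilde{S}, E) \in U$, while after $k+1$ steps the state becomes exactly $\check{E}$ and the remaining strategy is exactly $\sigma^{k+1}(\tilde{S}) = \check{S}$; that is, $G_f^{k+1}(\tilde{S}, E) = (\check{S}, \check{E}) \in V$. Hence $G_f^{k+1}(U) \cap V \neq \varnothing$, proving transitivity.

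For regularity, given $(S,E)$ and $\varepsilon > 0$, I would produce a nearby periodic point. Fix $k$ so that sharing the first $k$ strategy terms forces $d_s < \varepsilon$. Set $\tilde{S}^m = S^m$ for $0 \leqslant m \leqslant k-1$, then let $\tilde{S}^{k} = S^0 \,\Delta\, \cdots \,\Delta\, S^{k-1}$, and repeat the resulting block of length $k+1$ periodically. Since the total XOR over one block is $S^0 \oplus \cdots \oplus S^{k-1} \oplus \tilde{S}^k = \varnothing$, applying a full block to $E$ returns the state to $E$; as $\tilde{S}$ itself has period $k+1$ under $\sigma$, we obtain $G_f^{k+1}(\tilde{S}, E) = (\tilde{S}, E)$, so $(\tilde{S}, E)$ is periodic. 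Its state part is exactly $E$, giving $d_e = 0$, and its strategy matches $S$ on the first $k$ terms, giving $d_s < \varepsilon$; hence it lies within $\varepsilon$ of $(S,E)$, and periodic points are dense.

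Topological transitivity together with regularity yields Devaney chaos. I expect the only real difficulty to be bookkeeping: one must translate the $10^{-k}$ thresholds defining $d_s$ into the precise number of initial strategy terms that must be preserved, minding the off-by-one between the index of $i(S) = S^0$ (which drives the first state update) and the summation starting at $k=1$ in $d_s$, and then verify that splicing in the bridging subset $\mathcal{T}$ does not disturb the preserved terms. The dynamical content, by contrast, is immediate from the additive XOR structure of the vectorial negation.
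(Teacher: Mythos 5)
Your proof is correct and is essentially the paper's own argument: the paper's ``strong transitivity'' lemma is precisely your one-step bridging observation (splice the set of coordinates where the current and target states differ into the strategy after the preserved prefix), and its regularity proof likewise builds a periodic point by appending a returning block to the first terms of $S$ and repeating it. The only differences are cosmetic --- you instantiate the return in a single explicit step $\tilde{S}^k = S^0 \,\Delta\, \cdots \,\Delta\, S^{k-1}$ where the paper invokes its lemma to obtain a return path of unspecified length $t_2$, and you state explicitly the restriction to the vectorial negation that the paper's lemma uses tacitly.
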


Let us firstly prove the following lemma.

\begin{lemma}[Strong transitivity]
\label{strongTrans}
 For all couples $X,Y \in \mathcal{X}$ and any neighborhood $V$ of $X$, we can 
find $n \in \mathds{N}^*$ and $X' \in V$ such that $G^n(X')=Y$.
\end{lemma}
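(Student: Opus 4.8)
The plan is to exploit the defining feature of this setting, namely that $f$ is the vectorial negation: a single application of $F_f$ with a well-chosen subset can carry any Boolean configuration to any other, since the cells designated by the subset are exactly the ones that get flipped (Equation~\ref{general CIs}). Write $X=(S,E)$ and $Y=(\check{S},\check{E})$. First I would unwind the neighborhood hypothesis: as $V$ is a neighborhood of $X$, it contains an open ball $B(X,\varepsilon)$ with $\varepsilon<1$. Because $d_e$ is integer-valued, any $X'=(S',E')\in B(X,\varepsilon)$ must satisfy $E'=E$; and, as recalled just before the distance was introduced, the floating part $d_s$ being smaller than $10^{-k_0}$ is equivalent to the first $k_0$ strategy terms coinciding. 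Hence, choosing $k_0$ with $10^{-k_0}\leq\varepsilon$, it suffices that $S'$ agree with $S$ on its first $k_0$ terms to guarantee $X'\in V$.

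Then I would build $X'=(S',E)$ explicitly. I keep $S'^{j}=S^{j}$ for $0\leq j\leq k_0-1$, which both places $X'$ in $V$ and, after $k_0$ applications of $G$, yields a Boolean state $E^{\ast}$ entirely determined by $E$ and $S^{0},\dots,S^{k_0-1}$ (the strategy at that stage being $\sigma^{k_0}(S')=(S^{k_0},S^{k_0+1},\dots)$). The crucial step is the choice of the $(k_0{+}1)$-th strategy term: I set $S'^{k_0}=\{\,i\in\llbracket 1;\mathsf{N}\rrbracket : E^{\ast}_i\neq\check{E}_i\,\}$, the set of positions where $E^{\ast}$ disagrees with the target $\check{E}$. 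Since with the negation $F_f$ flips precisely the cells in the chosen subset, the Boolean component after $k_0+1$ steps becomes exactly $\check{E}$.

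Finally, to recover the target strategy I would prescribe the remaining terms by $S'^{\,k_0+1+m}=\check{S}^{m}$ for every $m\geq 0$. As $G$ shifts the strategy at each step, after $n=k_0+1$ iterations the strategy component is $\sigma^{k_0+1}(S')=\check{S}$ while the Boolean component is $\check{E}$; therefore $G^{n}(X')=Y$ with $X'\in V$, which is the desired conclusion.

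I expect the only delicate point to be the bookkeeping relating $\varepsilon$ to the number $k_0$ of leading strategy terms that must be frozen, together with the off-by-one between the iteration index and the index $k$ in the sum defining $d_s$. The conceptual heart — that one negation step with the right subset reaches any Boolean target in a single move — is exactly the property afforded by updating subsets rather than single cells, and it is what makes strong transitivity so direct here; the rest is just concatenating the frozen prefix, the steering term, and a shifted copy of $\check{S}$ into one strategy $S'$.
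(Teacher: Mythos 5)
Your proof is correct and takes essentially the same route as the paper's: freeze a prefix of $S$ long enough to keep $X'$ in the ball, insert a single steering term equal to the set of coordinates where the state reached after the prefix differs from the state of $Y$ (using that the vectorial negation flips exactly the cells of the chosen subset), and then append the strategy of $Y$ so that the shift leaves you exactly at $Y$ after $k_0+1$ iterations. The bookkeeping subtleties you flag (strict inequality $10^{-k_0}<\varepsilon$ and the indexing of the sum in $d_s$) are real but minor, and the paper's own proof glosses over them as well.
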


\begin{proof}
 Let $X=(S,E)$, $\varepsilon>0$, and $k_0 = \lfloor log_{10}(\varepsilon)+1 \rfloor$. 
Any point $X'=(S',E')$ such that $E'=E$ and $\forall k \leqslant k_0, S'^k=S^k$, 
are in the open ball $\mathcal{B}\left(X,\varepsilon\right)$. Let us define 
$\check{X} = \left(\check{S},\check{E}\right)$, where $\check{X}= G^{k_0}(X)$.
We denote by $s\subset \llbracket 1; \mathsf{N} \rrbracket$ the set of coordinates
that are different between $\check{E}$ and the state of $Y$. Thus each point $X'$ of
the form $(S',E')$ where $E'=E$ and $S'$ starts with 
$(S^0, S^1, \hdots, S^{k_0},s,\hdots)$, verifies the following properties:
\begin{itemize}
 \item $X'$ is in $\mathcal{B}\left(X,\varepsilon\right)$,
 \item the state of $G_f^{k_0+1}(X')$ is the state of $Y$.
\end{itemize}
Finally the point $\left(\left(S^0, S^1, \hdots, S^{k_0},s,s^0, s^1, \hdots\right); E\right)$, 
where $(s^0,s^1, \hdots)$ is the strategy of $Y$, satisfies the properties
claimed in the lemma.
\end{proof}

We can now prove Theorem~\ref{t:chaos des general}...

\begin{proof}[Theorem~\ref{t:chaos des general}]
Firstly, strong transitivity implies transitivity.

Let $(S,E) \in\mathcal{X}$ and $\varepsilon >0$. To
prove that $G_f$ is regular, it is sufficient to prove that
there exists a strategy $\tilde S$ such that the distance between
$(\tilde S,E)$ and $(S,E)$ is less than $\varepsilon$, and such that
$(\tilde S,E)$ is a periodic point.

Let $t_1=\lfloor-\log_{10}(\varepsilon)\rfloor$, and let $E'$ be the
configuration that we obtain from $(S,E)$ after $t_1$ iterations of
$G_f$. As $G_f$ is strongly transitive, there exists a strategy $S'$ 
and $t_2\in\mathds{N}$ such
that $E$ is reached from $(S',E')$ after $t_2$ iterations of $G_f$.

Consider the strategy $\tilde S$ that alternates the first $t_1$ terms
of $S$ and the first $t_2$ terms of $S'$: $$\tilde
S=(S_0,\dots,S_{t_1-1},S'_0,\dots,S'_{t_2-1},S_0,\dots,S_{t_1-1},S'_0,\dots,S'_{t_2-1},S_0,\dots).$$ It
is clear that $(\tilde S,E)$ is obtained from $(\tilde S,E)$ after
$t_1+t_2$ iterations of $G_f$. So $(\tilde S,E)$ is a periodic
point. Since $\tilde S_t=S_t$ for $t<t_1$, by the choice of $t_1$, we
have $d((S,E),(\tilde S,E))<\epsilon$.
\end{proof}

\section{Efficient PRNG based on Chaotic Iterations}
\label{sec:efficient PRNG}

Based on the proof presented in the previous section, it is now possible to 
improve the speed of the generator formerly presented in~\cite{bgw09:ip,guyeux10}. 
The first idea is to consider
that the provided strategy is a pseudorandom Boolean vector obtained by a
given PRNG.
An iteration of the system is simply the bitwise exclusive or between
the last computed state and the current strategy.
Topological properties of disorder exhibited by chaotic 
iterations can be inherited by the inputted generator, we hope by doing so to 
obtain some statistical improvements while preserving speed.

Let us give an example using 16-bits numbers, to clearly understand how the bitwise xor operations
are
done.  
Suppose  that $x$ and the  strategy $S^i$ are given as
binary vectors.
Table~\ref{TableExemple} shows the result of $x \oplus S^i$.

\begin{table}
$$
\begin{array}{|cc|cccccccccccccccc|}
\hline
x      &=&1&0&1&1&1&0&1&0&1&0&0&1&0&0&1&0\\
\hline
S^i      &=&0&1&1&0&0&1&1&0&1&1&1&0&0&1&1&1\\
\hline
x \oplus S^i&=&1&1&0&1&1&1&0&0&0&1&1&1&0&1&0&1\\
\hline

\hline
 \end{array}
$$
\caption{Example of an arbitrary round of the proposed generator}
\label{TableExemple}
\end{table}

\lstset{language=C,caption={C code of the sequential PRNG based on chaotic iteration\
s},label=algo:seqCIPRNG}
\begin{lstlisting}
unsigned int CIPRNG() {
  static unsigned int x = 123123123;
  unsigned long t1 = xorshift();
  unsigned long t2 = xor128();
  unsigned long t3 = xorwow();
  x = x^(unsigned int)t1;
  x = x^(unsigned int)(t2>>32);
  x = x^(unsigned int)(t3>>32);
  x = x^(unsigned int)t2;
  x = x^(unsigned int)(t1>>32);
  x = x^(unsigned int)t3;
  return x;
}
\end{lstlisting}

In Listing~\ref{algo:seqCIPRNG} a sequential  version of the proposed PRNG based
on  chaotic  iterations  is  presented.   The xor  operator  is  represented  by
\textasciicircum.  This function uses  three classical 64-bits PRNGs, namely the
\texttt{xorshift},         the          \texttt{xor128},         and         the
\texttt{xorwow}~\cite{Marsaglia2003}.  In the following, we call them ``xor-like
PRNGs''.   As each  xor-like PRNG  uses 64-bits  whereas our  proposed generator
works with 32-bits, we use the command \texttt{(unsigned int)}, that selects the
32 least  significant bits  of a given  integer, and the  code \texttt{(unsigned
  int)(t$>>$32)} in order to obtain the 32 most significant bits of \texttt{t}.

Thus producing a pseudorandom number needs 6 xor operations with 6 32-bits numbers
that  are provided by  3 64-bits  PRNGs.  This  version successfully  passes the
stringent BigCrush battery of tests~\cite{LEcuyerS07}.

\section{Efficient PRNGs based on Chaotic Iterations on GPU}
\label{sec:efficient PRNG gpu}

In order to  take benefits from the computing power  of GPU, a program
needs  to have  independent blocks  of  threads that  can be  computed
simultaneously. In general,  the larger the number of  threads is, the
more local  memory is  used, and the  less branching  instructions are
used  (if,  while,  ...),  the  better the  performances  on  GPU  is.
Obviously, having these requirements in  mind, it is possible to build
a   program    similar   to    the   one   presented    in  Listing 
\ref{algo:seqCIPRNG}, which computes  pseudorandom numbers on GPU.  To
do  so,  we  must   firstly  recall  that  in  the  CUDA~\cite{Nvid10}
environment,    threads    have     a    local    identifier    called
\texttt{ThreadIdx},  which   is  relative  to   the  block  containing
them. Furthermore, in  CUDA, parts of  the code that are executed by the  GPU, are
called {\it kernels}.

\subsection{Naive Version for GPU}

It is possible to deduce from the CPU version a quite similar version adapted to GPU.
The simple principle consists in making each thread of the GPU computing the CPU version of our PRNG.  
Of course,  the  three xor-like
PRNGs  used in these computations must have different  parameters. 
In a given thread, these parameters are
randomly picked from another PRNGs. 
The  initialization stage is performed by  the CPU.
To do it, the  ISAAC  PRNG~\cite{Jenkins96} is used to  set  all  the
parameters embedded into each thread.   

The implementation of  the three
xor-like  PRNGs  is  straightforward  when  their  parameters  have  been
allocated in  the GPU memory.  Each xor-like  works with  an internal
number  $x$  that saves  the  last  generated  pseudorandom number. Additionally,  the
implementation of the  xor128, the xorshift, and the  xorwow respectively require
4, 5, and 6 unsigned long as internal variables.

\begin{algorithm}

\KwIn{InternalVarXorLikeArray: array with internal variables of the 3 xor-like
PRNGs in global memory\;
NumThreads: number of threads\;}
\KwOut{NewNb: array containing random numbers in global memory}
\If{threadIdx is concerned by the computation} {
  retrieve data from InternalVarXorLikeArray[threadIdx] in local variables\;
  \For{i=1 to n} {
    compute a new PRNG as in Listing\ref{algo:seqCIPRNG}\;
    store the new PRNG in NewNb[NumThreads*threadIdx+i]\;
  }
  store internal variables in InternalVarXorLikeArray[threadIdx]\;
}

\caption{Main kernel of the GPU ``naive'' version of the PRNG based on chaotic iterations}
\label{algo:gpu_kernel}
\end{algorithm}

Algorithm~\ref{algo:gpu_kernel}  presents a naive  implementation of the proposed  PRNG on
GPU.  Due to the available  memory in the  GPU and the number  of threads
used simultaneously,  the number  of random numbers  that a thread  can generate
inside   a    kernel   is   limited  (\emph{i.e.},    the    variable   \texttt{n}   in
algorithm~\ref{algo:gpu_kernel}). For instance, if  $100,000$ threads are used and
if $n=100$\footnote{in fact, we need to add the initial seed (a 32-bits number)},
then   the  memory   required   to  store all of the  internals   variables  of both the  xor-like
PRNGs\footnote{we multiply this number by $2$ in order to count 32-bits numbers}
and  the pseudorandom  numbers generated by  our  PRNG,  is  equal to  $100,000\times  ((4+5+6)\times
2+(1+100))=1,310,000$ 32-bits numbers, that is, approximately $52$Mb.

This generator is able to pass the whole BigCrush battery of tests, for all
the versions that have been tested depending on their number of threads 
(called \texttt{NumThreads} in our algorithm, tested up to $5$ million).

\begin{remark}
The proposed algorithm has  the  advantage of  manipulating  independent
PRNGs, so this version is easily adaptable on a cluster of computers too. The only thing
to ensure is to use a single ISAAC PRNG. To achieve this requirement, a simple solution consists in
using a master node for the initialization. This master node computes the initial parameters
for all the different nodes involved in the computation.
\end{remark}

\subsection{Improved Version for GPU}

As GPU cards using CUDA have shared memory between threads of the same block, it
is possible  to use this  feature in order  to simplify the  previous algorithm,
i.e., to use less  than 3 xor-like PRNGs. The solution  consists in computing only
one xor-like PRNG by thread, saving  it into the shared memory, and then to use the results
of some  other threads in the  same block of  threads. In order to  define which
thread uses the result of which other  one, we can use a combination array that
contains  the indexes  of  all threads  and  for which  a combination has  been
performed. 

In  Algorithm~\ref{algo:gpu_kernel2},  two  combination  arrays are  used.   The
variable     \texttt{offset}    is     computed    using     the     value    of
\texttt{combination\_size}.   Then we  can compute  \texttt{o1}  and \texttt{o2}
representing the  indexes of  the other  threads whose results  are used  by the
current one.   In this algorithm, we  consider that a 32-bits  xor-like PRNG has
been chosen. In practice, we  use the xor128 proposed in~\cite{Marsaglia2003} in
which  unsigned longs  (64 bits)  have been  replaced by  unsigned  integers (32
bits).

This version  can also pass the whole {\it BigCrush} battery of tests.

\begin{algorithm}

\KwIn{InternalVarXorLikeArray: array with internal variables of 1 xor-like PRNGs
in global memory\;
NumThreads: Number of threads\;
array\_comb1, array\_comb2: Arrays containing combinations of size combination\_size\;}

\KwOut{NewNb: array containing random numbers in global memory}
\If{threadId is concerned} {
  retrieve data from InternalVarXorLikeArray[threadId] in local variables including shared memory and x\;
  offset = threadIdx\%combination\_size\;
  o1 = threadIdx-offset+array\_comb1[offset]\;
  o2 = threadIdx-offset+array\_comb2[offset]\;
  \For{i=1 to n} {
    t=xor-like()\;
    t=t\textasciicircum shmem[o1]\textasciicircum shmem[o2]\;
    shared\_mem[threadId]=t\;
    x = x\textasciicircum t\;

    store the new PRNG in NewNb[NumThreads*threadId+i]\;
  }
  store internal variables in InternalVarXorLikeArray[threadId]\;
}

\caption{Main kernel for the chaotic iterations based PRNG GPU efficient
version\label{IR}}
\label{algo:gpu_kernel2} 
\end{algorithm}

\subsection{Theoretical Evaluation of the Improved Version}

A run of Algorithm~\ref{algo:gpu_kernel2} consists in an operation ($x=x\oplus t$) having 
the form of Equation~\ref{equation Oplus}, which is equivalent to the iterative
system of Eq.~\ref{eq:generalIC}. That is, an iteration of the general chaotic
iterations is realized between the last stored value $x$ of the thread and a strategy $t$
(obtained by a bitwise exclusive or between a value provided by a xor-like() call
and two values previously obtained by two other threads).
To be certain that we are in the framework of Theorem~\ref{t:chaos des general},
we must guarantee that this dynamical system iterates on the space 
$\mathcal{X} = \mathcal{P}\left(\llbracket 1, \mathsf{N} \rrbracket\right)^\mathds{N}\times\mathds{B}^\mathsf{N}$.
The left term $x$ obviously belongs to $\mathds{B}^ \mathsf{N}$.
To prevent from any flaws of chaotic properties, we must check that the right 
term (the last $t$), corresponding to the strategies,  can possibly be equal to any
integer of $\llbracket 1, \mathsf{N} \rrbracket$. 

Such a result is obvious, as for the xor-like(), all the
integers belonging into its interval of definition can occur at each iteration, and thus the 
last $t$ respects the requirement. Furthermore, it is possible to
prove by an immediate mathematical induction that, as the initial $x$
is uniformly distributed (it is provided by a cryptographically secure PRNG),
the two other stored values shmem[o1] and shmem[o2] are uniformly distributed too,
(this is the induction hypothesis), and thus the next $x$ is finally uniformly distributed.

Thus Algorithm~\ref{algo:gpu_kernel2} is a concrete realization of the general
chaotic iterations presented previously, and for this reason, it satisfies the 
Devaney's formulation of a chaotic behavior.

\section{Experiments}
\label{sec:experiments}

Different experiments  have been  performed in order  to measure  the generation
speed. We have used a first computer equipped with a Tesla C1060 NVidia  GPU card
and an
Intel  Xeon E5530 cadenced  at 2.40  GHz,  and 
a second computer  equipped with a smaller  CPU and  a GeForce GTX  280. 
All the
cards have 240 cores.

In  Figure~\ref{fig:time_xorlike_gpu} we  compare the  quantity of  pseudorandom numbers
generated per second with various xor-like based PRNGs. In this figure, the optimized
versions use the {\it xor64} described in~\cite{Marsaglia2003}, whereas the naive versions
embed  the three  xor-like  PRNGs described  in Listing~\ref{algo:seqCIPRNG}.   In
order to obtain the optimal performances, the storage of pseudorandom numbers
into the GPU memory has been removed. This step is time consuming and slows down the numbers
generation.  Moreover this   storage  is  completely
useless, in case of applications that consume the pseudorandom
numbers  directly   after generation. We can see  that when the number of  threads is greater
than approximately 30,000 and lower than 5 million, the number of pseudorandom numbers generated
per second  is almost constant.  With the  naive version, this value ranges from 2.5 to
3GSamples/s.   With  the  optimized   version,  it  is  approximately  equal to
20GSamples/s. Finally  we can remark  that both GPU  cards are quite  similar, but in
practice,  the Tesla C1060  has more  memory than  the GTX  280, and  this memory
should be of better quality.
As a  comparison,   Listing~\ref{algo:seqCIPRNG}  leads   to the  generation of  about
138MSample/s when using one core of the Xeon E5530.

\begin{figure}[htbp]
\begin{center}
  \includegraphics[scale=.7]{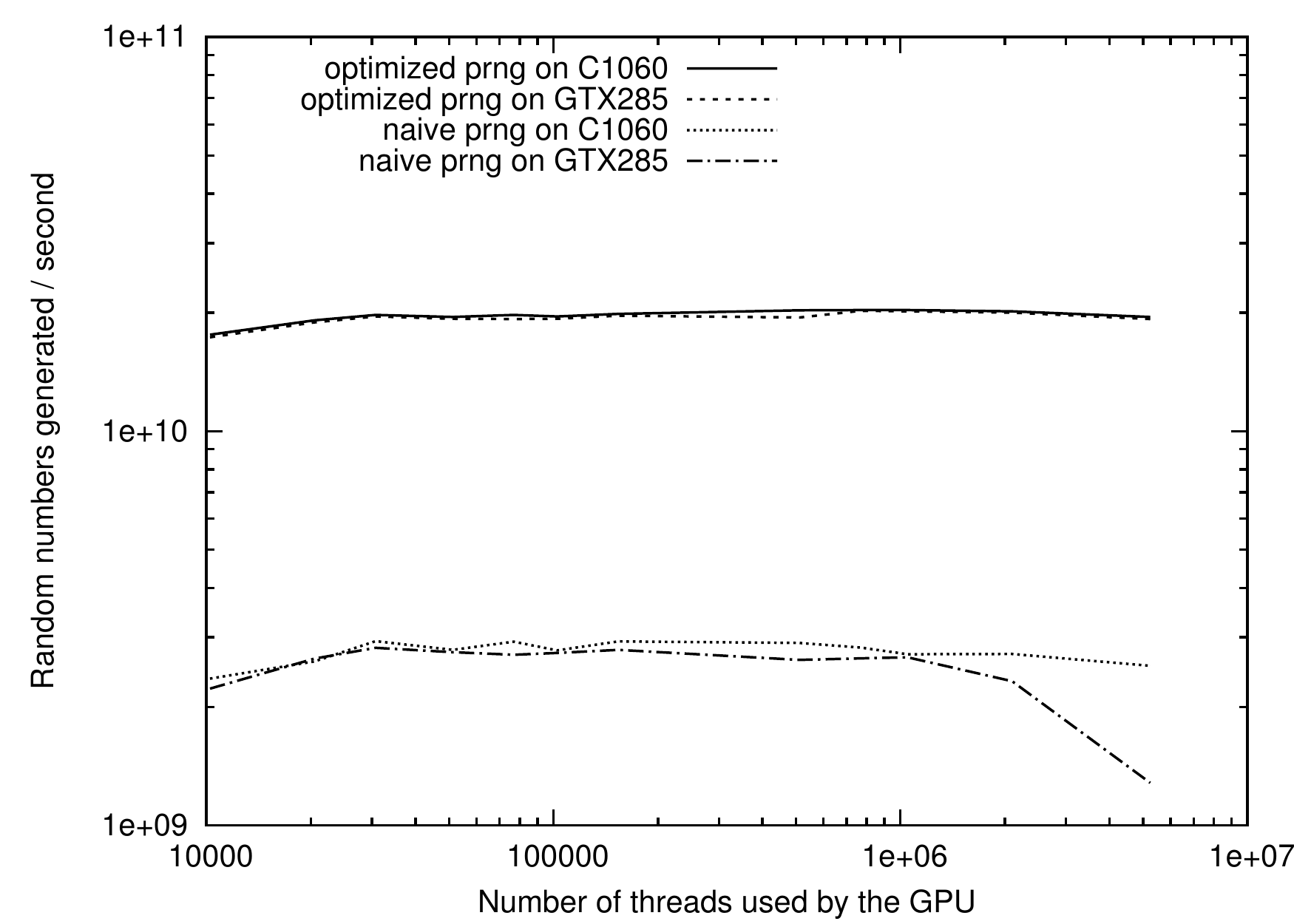}
\end{center}
\caption{Quantity of pseudorandom numbers generated per second with the xorlike-based PRNG}
\label{fig:time_xorlike_gpu}
\end{figure}

In Figure~\ref{fig:time_bbs_gpu} we highlight  the performances of the optimized
BBS-based PRNG on GPU.  On  the Tesla C1060 we obtain approximately 700MSample/s
and  on the  GTX 280  about  670MSample/s, which  is obviously  slower than  the
xorlike-based PRNG on GPU. However, we  will show in the next sections that this
new PRNG  has a strong  level of  security, which is  necessarily paid by  a speed
reduction.

\begin{figure}[htbp]
\begin{center}
  \includegraphics[scale=.7]{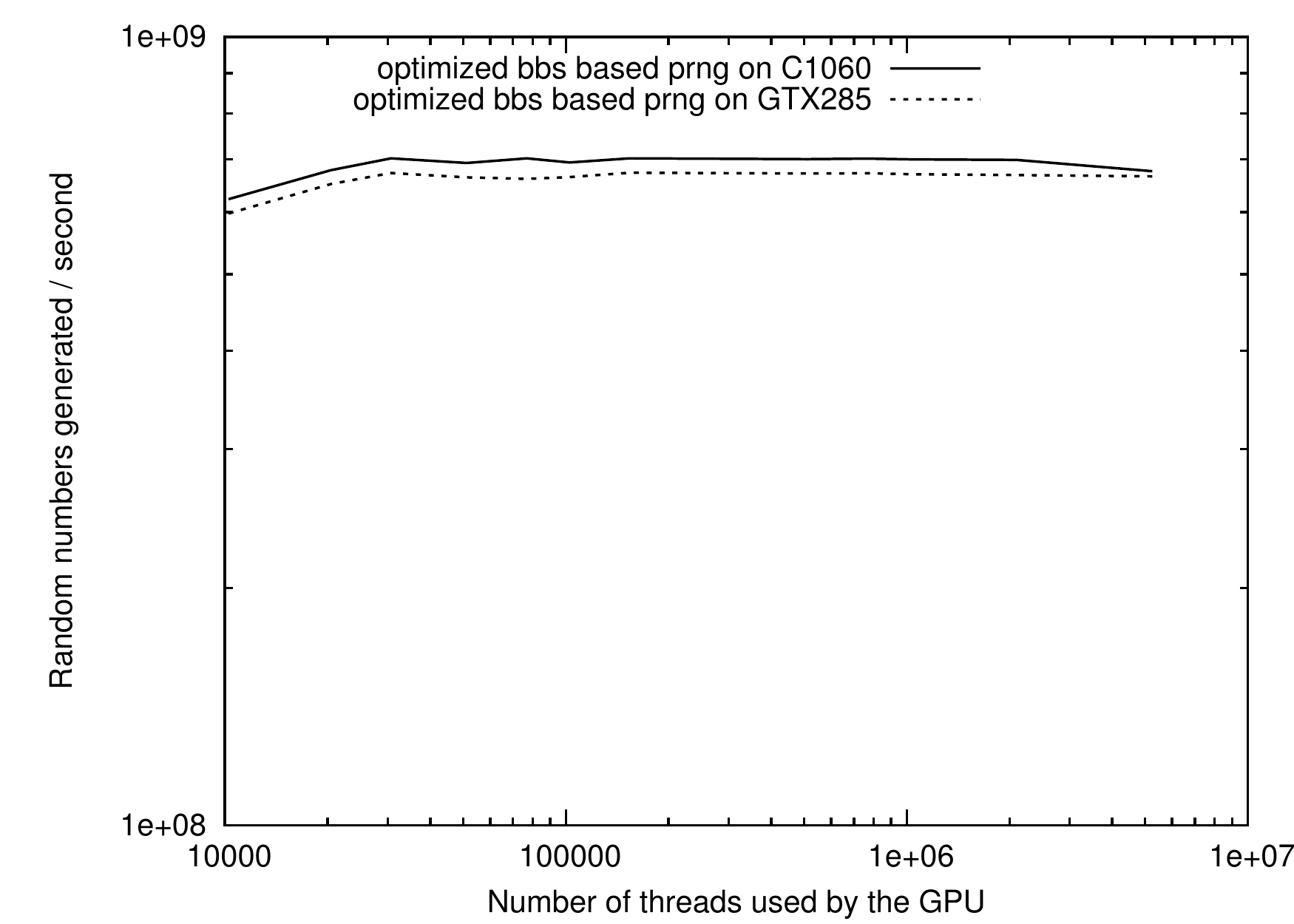}
\end{center}
\caption{Quantity of pseudorandom numbers generated per second using the BBS-based PRNG}
\label{fig:time_bbs_gpu}
\end{figure}

All  these  experiments allow  us  to conclude  that  it  is possible  to
generate a very large quantity of pseudorandom  numbers statistically perfect with the  xor-like version.
To a certain extend, it is also the case with the secure BBS-based version, the speed deflation being
explained by the fact that the former  version has ``only''
chaotic properties and statistical perfection, whereas the latter is also cryptographically secure,
as it is shown in the next sections.

\section{Security Analysis}
\label{sec:security analysis}

In this section the concatenation of two strings $u$ and $v$ is classically
denoted by $uv$.
In a cryptographic context, a pseudorandom generator is a deterministic
algorithm $G$ transforming strings  into strings and such that, for any
seed $s$ of length $m$, $G(s)$ (the output of $G$ on the input $s$) has size
$\ell_G(m)$ with $\ell_G(m)>m$.
The notion of {\it secure} PRNGs can now be defined as follows. 

\begin{definition}
A cryptographic PRNG $G$ is secure if for any probabilistic polynomial time
algorithm $D$, for any positive polynomial $p$, and for all sufficiently
large $m$'s,
$$| \mathrm{Pr}[D(G(U_m))=1]-Pr[D(U_{\ell_G(m)})=1]|< \frac{1}{p(m)},$$
where $U_r$ is the uniform distribution over $\{0,1\}^r$ and the
probabilities are taken over $U_m$, $U_{\ell_G(m)}$ as well as over the
internal coin tosses of $D$. 
\end{definition}

Intuitively, it means that there is no polynomial time algorithm that can
distinguish a perfect uniform random generator from $G$ with a non
negligible probability. The interested reader is referred
to~\cite[chapter~3]{Goldreich} for more information. Note that it is
quite easily possible to change the function $\ell$ into any polynomial
function $\ell^\prime$ satisfying $\ell^\prime(m)>m)$~\cite[Chapter 3.3]{Goldreich}.

The generation schema developed in (\ref{equation Oplus}) is based on a
pseudorandom generator. Let $H$ be a cryptographic PRNG. We may assume,
without loss of generality, that for any string $S_0$ of size $N$, the size
of $H(S_0)$ is $kN$, with $k>2$. It means that $\ell_H(N)=kN$. 
Let $S_1,\ldots,S_k$ be the 
strings of length $N$ such that $H(S_0)=S_1 \ldots S_k$ ($H(S_0)$ is the concatenation of
the $S_i$'s). The cryptographic PRNG $X$ defined in (\ref{equation Oplus})
is the algorithm mapping any string of length $2N$ $x_0S_0$ into the string
$(x_0\oplus S_0 \oplus S_1)(x_0\oplus S_0 \oplus S_1\oplus S_2)\ldots
(x_o\bigoplus_{i=0}^{i=k}S_i)$. One in particular has $\ell_{X}(2N)=kN=\ell_H(N)$. 
We claim now that if this PRNG is secure,
then the new one is secure too.

\begin{proposition}
\label{cryptopreuve}
If $H$ is a secure cryptographic PRNG, then $X$ is a secure cryptographic
PRNG too.
\end{proposition}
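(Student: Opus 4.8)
The plan is to argue by contraposition and reduce the security of $X$ to that of $H$ through a standard distinguisher-simulation argument. Suppose $X$ is not secure: there exist a probabilistic polynomial time distinguisher $D$ and a positive polynomial $p$ such that $|\mathrm{Pr}[D(X(U_{2N}))=1]-\mathrm{Pr}[D(U_{kN})=1]|\geqslant 1/p(2N)$ for infinitely many $N$. From $D$ I would construct a distinguisher $D'$ against $H$ and show that its advantage is at least as large, contradicting the assumed security of $H$.

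The key structural observation concerns the role played by the extra seed block $x_0$. Writing $y_0 = x_0 \oplus S_0$, the $j$-th output block of $X$ on the seed $x_0 S_0$ is exactly $y_0 \oplus \bigoplus_{i=1}^{j} S_i$, where $S_1\ldots S_k = H(S_0)$. Since $x_0$ is drawn uniformly and independently of $S_0$, the masked value $y_0$ is uniform on $\{0,1\}^N$ and independent of $S_0$, hence independent of $H(S_0)$. Consequently the distribution $X(U_{2N})$ coincides with the distribution obtained by sampling $w = H(U_N)$ together with an independent uniform $y_0$, and then applying the deterministic prefix-xor map
$$\phi(y_0, S_1\ldots S_k) = (y_0\oplus S_1)(y_0\oplus S_1\oplus S_2)\ldots\left(y_0\oplus\bigoplus_{i=1}^{k}S_i\right).$$

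Next I would record the only algebraic fact needed: for every fixed $y_0$, the map $w\mapsto \phi(y_0,w)$ is a bijection of $\{0,1\}^{kN}$, with explicit inverse $S_1 = y_0\oplus T_1$ and $S_j = T_{j-1}\oplus T_j$ for $j\geqslant 2$. In particular $\phi(y_0,\cdot)$ sends the uniform distribution to the uniform distribution, so that $\phi(U_N,U_{kN}) = U_{kN}$. The distinguisher $D'$ is then defined on an input $w$ of length $kN$ by sampling a uniform $y_0\in\{0,1\}^N$, computing $w' = \phi(y_0,w)$, and returning $D(w')$. By the previous paragraph, feeding $D'$ with $H(U_N)$ makes $w'$ distributed as $X(U_{2N})$, whereas feeding it with $U_{kN}$ makes $w'$ uniform; hence the advantage of $D'$ on $H$ equals that of $D$ on $X$, namely at least $1/p(2N)$. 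Since $2N$ is polynomial in $N$ and $\phi$ is computable in polynomial time, $D'$ is a polynomial time distinguisher whose advantage is non-negligible in $N$, contradicting the security of $H$ and completing the argument.

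I expect the only genuine subtlety to be the independence claim in the second paragraph: one must verify that the additive mask $x_0$ makes $y_0$ uniform and independent of $H(S_0)$, because this is precisely what lets $D'$ simulate $X(U_{2N})$ without ever knowing the seed $S_0$ that produced its input $w$. Everything else — the bijectivity of $\phi$, the polynomial-time bound, and the passage from an advantage expressed in $2N$ to non-negligibility in $N$ — is routine.
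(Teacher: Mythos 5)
Your proof is correct and follows essentially the same route as the paper's: contraposition, the identical prefix-xor distinguisher $D'$ with a uniformly random mask, and the bijectivity of the map $w \mapsto \varphi_y(w)$ to show the uniform case stays uniform. If anything you are slightly more careful than the paper, since your explicit check that $y_0 = x_0 \oplus S_0$ is uniform and independent of $H(S_0)$ is exactly what justifies the paper's loosely stated identification $\varphi_y(H(x)) = X(yx)$ (which, as written there, holds only after re-randomizing the mask).
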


\begin{proof}
The proposition is proved by contraposition. Assume that $X$ is not
secure. By Definition, there exists a polynomial time probabilistic
algorithm $D$, a positive polynomial $p$, such that for all $k_0$ there exists
$N\geq \frac{k_0}{2}$ satisfying 
$$| \mathrm{Pr}[D(X(U_{2N}))=1]-\mathrm{Pr}[D(U_{kN}=1]|\geq \frac{1}{p(2N)}.$$
We describe a new probabilistic algorithm $D^\prime$ on an input $w$ of size
$kN$:
\begin{enumerate}
\item Decompose $w$ into $w=w_1\ldots w_{k}$, where each $w_i$ has size $N$.
\item Pick a string $y$ of size $N$ uniformly at random.
\item Compute $z=(y\oplus w_1)(y\oplus w_1\oplus w_2)\ldots (y
  \bigoplus_{i=1}^{i=k} w_i).$
\item Return $D(z)$.
\end{enumerate}

Consider  for each $y\in \mathbb{B}^{kN}$ the function $\varphi_{y}$
from $\mathbb{B}^{kN}$ into $\mathbb{B}^{kN}$ mapping $w=w_1\ldots w_k$
(each $w_i$ has length $N$) to 
$(y\oplus w_1)(y\oplus w_1\oplus w_2)\ldots (y
  \bigoplus_{i=1}^{i=k_1} w_i).$ By construction, one has for every $w$,
\begin{equation}\label{PCH-1}
D^\prime(w)=D(\varphi_y(w)),
\end{equation}
where $y$ is randomly generated. 
Moreover, for each $y$, $\varphi_{y}$ is injective: if 
$(y\oplus w_1)(y\oplus w_1\oplus w_2)\ldots (y\bigoplus_{i=1}^{i=k_1}
w_i)=(y\oplus w_1^\prime)(y\oplus w_1^\prime\oplus w_2^\prime)\ldots
(y\bigoplus_{i=1}^{i=k} w_i^\prime)$, then for every $1\leq j\leq k$,
$y\bigoplus_{i=1}^{i=j} w_i^\prime=y\bigoplus_{i=1}^{i=j} w_i$. It follows,
by a direct induction, that $w_i=w_i^\prime$. Furthermore, since $\mathbb{B}^{kN}$
is finite, each $\varphi_y$ is bijective. Therefore, and using (\ref{PCH-1}),
one has
\begin{equation}\label{PCH-2}
\mathrm{Pr}[D^\prime(U_{kN})=1]=\mathrm{Pr}[D(\varphi_y(U_{kN}))=1]=\mathrm{Pr}[D(U_{kN})=1].
\end{equation}

Now, using (\ref{PCH-1}) again, one has  for every $x$,
\begin{equation}\label{PCH-3}
D^\prime(H(x))=D(\varphi_y(H(x))),
\end{equation}
where $y$ is randomly generated. By construction, $\varphi_y(H(x))=X(yx)$,
thus
\begin{equation}\label{PCH-3}
D^\prime(H(x))=D(yx),
\end{equation}
where $y$ is randomly generated. 
It follows that 

\begin{equation}\label{PCH-4}
\mathrm{Pr}[D^\prime(H(U_{N}))=1]=\mathrm{Pr}[D(U_{2N})=1].
\end{equation}
 From (\ref{PCH-2}) and (\ref{PCH-4}), one can deduce that
there exists a polynomial time probabilistic
algorithm $D^\prime$, a positive polynomial $p$, such that for all $k_0$ there exists
$N\geq \frac{k_0}{2}$ satisfying 
$$| \mathrm{Pr}[D(H(U_{N}))=1]-\mathrm{Pr}[D(U_{kN}=1]|\geq \frac{1}{p(2N)},$$
proving that $H$ is not secure, which is a contradiction. 
\end{proof}

\section{Cryptographical Applications}

\subsection{A Cryptographically Secure PRNG for GPU}
\label{sec:CSGPU}

It is  possible to build a  cryptographically secure PRNG based  on the previous
algorithm (Algorithm~\ref{algo:gpu_kernel2}).   Due to Proposition~\ref{cryptopreuve},
it simply consists  in replacing
the  {\it  xor-like} PRNG  by  a  cryptographically  secure one.  
We have chosen the Blum Blum Shum generator~\cite{BBS} (usually denoted by BBS) having the form:
$$x_{n+1}=x_n^2~ mod~ M$$  where $M$ is the product of  two prime numbers (these
prime numbers  need to be congruent  to 3 modulus  4). BBS is known to be
very slow and only usable for cryptographic applications.

The modulus operation is the most time consuming operation for current
GPU cards.  So in order to obtain quite reasonable performances, it is
required to use only modulus  on 32-bits integer numbers. Consequently
$x_n^2$ need  to be lesser than $2^{32}$,  and thus the number $M$ must be
lesser than $2^{16}$.  So in practice we can choose prime numbers around
256 that are congruent to 3 modulus 4.  With 32-bits numbers, only the
4 least significant bits of $x_n$ can be chosen (the maximum number of
indistinguishable    bits    is    lesser    than   or    equals    to
$log_2(log_2(M))$). In other words, to generate a  32-bits number, we need to use
8 times  the BBS  algorithm with possibly different  combinations of  $M$. This
approach is  not sufficient to be able to pass  all the tests of TestU01,
as small values of  $M$ for the BBS  lead to
  small periods. So, in  order to add randomness  we have proceeded with
the followings  modifications. 
\begin{itemize}
\item
Firstly, we  define 16 arrangement arrays  instead of 2  (as described in
Algorithm \ref{algo:gpu_kernel2}), but only 2 of them are used at each call of
the  PRNG kernels. In  practice, the  selection of   combination
arrays to be used is different for all the threads. It is determined
by using  the three last bits  of two internal variables  used by BBS.
In Algorithm~\ref{algo:bbs_gpu},
character  \& is for the  bitwise AND. Thus using  \&7 with  a number
gives the last 3 bits, thus providing a number between 0 and 7.
\item
Secondly, after the  generation of the 8 BBS numbers  for each thread, we
have a 32-bits number whose period is possibly quite small. So
to add randomness,  we generate 4 more BBS numbers   to
shift  the 32-bits  numbers, and  add up to  6 new  bits.  This  improvement is
described  in Algorithm~\ref{algo:bbs_gpu}.  In  practice, the last 2 bits
of the first new BBS number are  used to make a left shift of at most
3 bits. The  last 3 bits of the  second new BBS number are  added to the
strategy whatever the value of the first left shift. The third and the
fourth new BBS  numbers are used similarly to apply  a new left shift
and add 3 new bits.
\item
Finally, as  we use 8 BBS numbers  for each thread, the  storage of these
numbers at the end of the  kernel is performed using a rotation. So,
internal  variable for  BBS number  1 is  stored in  place  2, internal
variable  for BBS  number 2  is  stored in  place 3,  ..., and finally, internal
variable for BBS number 8 is stored in place 1.
\end{itemize}

\begin{algorithm}

\KwIn{InternalVarBBSArray: array with internal variables of the 8 BBS
in global memory\;
NumThreads: Number of threads\;
array\_comb: 2D Arrays containing 16 combinations (in first dimension)  of size combination\_size (in second dimension)\;
array\_shift[4]=\{0,1,3,7\}\;
}

\KwOut{NewNb: array containing random numbers in global memory}
\If{threadId is concerned} {
  retrieve data from InternalVarBBSArray[threadId] in local variables including shared memory and x\;
  we consider that bbs1 ... bbs8 represent the internal states of the 8 BBS numbers\;
  offset = threadIdx\%combination\_size\;
  o1 = threadIdx-offset+array\_comb[bbs1\&7][offset]\;
  o2 = threadIdx-offset+array\_comb[8+bbs2\&7][offset]\;
  \For{i=1 to n} {
    t$<<$=4\;
    t|=BBS1(bbs1)\&15\;
    ...\;
    t$<<$=4\;
    t|=BBS8(bbs8)\&15\;
    \tcp{two new shifts}
    shift=BBS3(bbs3)\&3\;
    t$<<$=shift\;
    t|=BBS1(bbs1)\&array\_shift[shift]\;
    shift=BBS7(bbs7)\&3\;
    t$<<$=shift\;
    t|=BBS2(bbs2)\&array\_shift[shift]\;
    t=t\textasciicircum  shmem[o1]\textasciicircum     shmem[o2]\;
    shared\_mem[threadId]=t\;
    x = x\textasciicircum   t\;

    store the new PRNG in NewNb[NumThreads*threadId+i]\;
  }
  store internal variables in InternalVarXorLikeArray[threadId] using a rotation\;
}

\caption{main kernel for the BBS based PRNG GPU}
\label{algo:bbs_gpu}
\end{algorithm}

In Algorithm~\ref{algo:bbs_gpu}, $n$ is for  the quantity of random numbers that
a thread has to  generate.  The operation t<<=4 performs a left  shift of 4 bits
on the variable  $t$ and stores the result in  $t$, and $BBS1(bbs1)\&15$ selects
the last  four bits  of the  result of $BBS1$.   Thus an  operation of  the form
$t<<=4; t|=BBS1(bbs1)\&15\;$  realizes in $t$ a  left shift of 4  bits, and then
puts the 4 last bits of $BBS1(bbs1)$  in the four last positions of $t$.  Let us
remark that the initialization $t$ is not a  necessity as we fill it 4 bits by 4
bits, until  having obtained 32-bits.  The  two last new shifts  are realized in
order to enlarge the small periods of  the BBS used here, to introduce a kind of
variability.  In these operations, we make twice a left shift of $t$ of \emph{at
  most}  3 bits,  represented by  \texttt{shift} in  the algorithm,  and  we put
\emph{exactly} the \texttt{shift}  last bits from a BBS  into the \texttt{shift}
last bits of $t$. For this, an array named \texttt{array\_shift}, containing the
correspondence between the  shift and the number obtained  with \texttt{shift} 1
to make the \texttt{and} operation is used. For example, with a left shift of 0,
we  make an  and operation  with 0,  with  a left  shift of  3, we  make an  and
operation with 7 (represented by 111 in binary mode).

It should  be noticed that this generator has once more the form $x^{n+1} = x^n \oplus S^n$,
where $S^n$ is referred in this algorithm as $t$: each iteration of this
PRNG ends with $x = x \wedge t$. This $S^n$ is only constituted
by secure bits produced by the BBS generator, and thus, due to
Proposition~\ref{cryptopreuve}, the resulted PRNG is cryptographically
secure.

\subsection{Toward a Cryptographically Secure and Chaotic Asymmetric Cryptosystem}
\label{Blum-Goldwasser}
We finish this research work by giving some thoughts about the use of
the proposed PRNG in an asymmetric cryptosystem.
This first approach will be further investigated in a future work.

\subsubsection{Recalls of the Blum-Goldwasser Probabilistic Cryptosystem}

The Blum-Goldwasser cryptosystem is a cryptographically secure asymmetric key encryption algorithm 
proposed in 1984~\cite{Blum:1985:EPP:19478.19501}.  The encryption algorithm 
implements a XOR-based stream cipher using the BBS PRNG, in order to generate 
the keystream. Decryption is done by obtaining the initial seed thanks to
the final state of the BBS generator and the secret key, thus leading to the
 reconstruction of the keystream.

The key generation consists in generating two prime numbers $(p,q)$, 
randomly and independently of each other, that are
 congruent to 3 mod 4, and to compute the modulus $N=pq$.
The public key is $N$, whereas the secret key is the factorization $(p,q)$.

Suppose Bob wishes to send a string $m=(m_0, \dots, m_{L-1})$ of $L$ bits to Alice:
\begin{enumerate}
\item Bob picks an integer $r$ randomly in the interval $\llbracket 1,N\rrbracket$ and computes $x_0 = r^2~mod~N$.
\item He uses the BBS to generate the keystream of $L$ pseudorandom bits $(b_0, \dots, b_{L-1})$, as follows. For $i=0$ to $L-1$,
\begin{itemize}
\item $i=0$.
\item While $i \leqslant L-1$:
\begin{itemize}
\item Set $b_i$ equal to the least-significant\footnote{As signaled previously, BBS can securely output up to $\mathsf{N} = \lfloor log(log(N)) \rfloor$ of the least-significant bits of $x_i$ during each round.} bit of $x_i$,
\item $i=i+1$,
\item $x_i = (x_{i-1})^2~mod~N.$
\end{itemize}
\end{itemize}
\item The ciphertext is computed by XORing the plaintext bits $m$ with the keystream: $ c = (c_0, \dots, c_{L-1}) = m \oplus  b$. This ciphertext is $[c, y]$, where $y=x_{0}^{2^{L}}~mod~N.$
\end{enumerate}

When Alice receives $\left[(c_0, \dots, c_{L-1}), y\right]$, she can recover $m$ as follows:
\begin{enumerate}
\item Using the secret key $(p,q)$, she computes $r_p = y^{((p+1)/4)^{L}}~mod~p$ and $r_q = y^{((q+1)/4)^{L}}~mod~q$.
\item The initial seed can be obtained using the following procedure: $x_0=q(q^{-1}~{mod}~p)r_p + p(p^{-1}~{mod}~q)r_q~{mod}~N$.
\item She recomputes the bit-vector $b$ by using BBS and $x_0$.
\item Alice finally computes the plaintext by XORing the keystream with the ciphertext: $ m = c \oplus  b$.
\end{enumerate}

\subsubsection{Proposal of a new Asymmetric Cryptosystem Adapted from Blum-Goldwasser}

We propose to adapt the Blum-Goldwasser protocol as follows. 
Let $\mathsf{N} = \lfloor log(log(N)) \rfloor$ be the number of bits that can
be obtained securely with the BBS generator using the public key $N$ of Alice.
Alice will pick randomly $S^0$ in $\llbracket 0, 2^{\mathsf{N}-1}\rrbracket$ too, and
her new public key will be $(S^0, N)$.

To encrypt his message, Bob will compute
\begin{equation}
c = \left(m_0 \oplus (b_0 \oplus S^0), m_1 \oplus (b_0 \oplus b_1 \oplus S^0), \hdots, m_{L-1} \oplus (b_0 \oplus b_1 \hdots \oplus b_{L-1} \oplus S^0) \right)
\end{equation}
instead of $\left(m_0 \oplus b_0, m_1 \oplus b_1, \hdots, m_{L-1} \oplus b_{L-1} \right)$. 

The same decryption stage as in Blum-Goldwasser leads to the sequence 
$\left(m_0 \oplus S^0, m_1 \oplus S^0, \hdots, m_{L-1} \oplus S^0 \right)$.
Thus, with a simple use of $S^0$, Alice can obtain the plaintext.
By doing so, the proposed generator is used in place of BBS, leading to
the inheritance of all the properties presented in this paper.

\section{Conclusion}

In  this  paper, a formerly proposed PRNG based on chaotic iterations
has been generalized to improve its speed. It has been proven to be
chaotic according to Devaney.
Efficient implementations on  GPU using xor-like  PRNGs as input generators
have shown that a very large quantity of pseudorandom numbers can be generated per second (about
20Gsamples/s), and that these proposed PRNGs succeed to pass the hardest battery in TestU01,
namely the BigCrush.
Furthermore, we have shown that when the inputted generator is cryptographically
secure, then it is the case too for the PRNG we propose, thus leading to
the possibility to develop fast and secure PRNGs using the GPU architecture.
Thoughts about an improvement of the Blum-Goldwasser cryptosystem, using the 
proposed method, has been finally proposed.

In future  work we plan to extend these researches, building a parallel PRNG for  clusters or
grid computing. Topological properties of the various proposed generators will be investigated,
and the use of other categories of PRNGs as input will be studied too. The improvement
of Blum-Goldwasser will be deepened. Finally, we
will try to enlarge the quantity of pseudorandom numbers generated per second either
in a simulation context or in a cryptographic one.

\bibliographystyle{plain} 
\bibliography{mabase}
\end{document}